\newcommand{\dom}[1]{\mathbb{#1}}
\newcommand{\mech}[1]{\mathcal{#1}}
\newcommand{\ex}[2]{{\ifx&#1& \mathbb{E} \else \underset{#1}{\mathbb{E}} \fi \left[#2\right]}}
\newcommand{\pr}[2]{{\ifx&#1& \mathbb{P} \else \underset{#1}{\mathbb{P}} \fi \left[#2\right]}}
\newcommand{\new}[1]{{#1}}
\newtcolorbox{mybox}[1]{%
    tikznode boxed title,
    enhanced,
    arc=0mm,
    interior style={white},
    attach boxed title to top center= {yshift=-\tcboxedtitleheight/2},
    fonttitle=\bfseries,
    colbacktitle=white,coltitle=black,
    boxed title style={size=normal,colframe=white,boxrule=0pt},
    title={#1}}
\begin{document}
\title{Privacy Amplification via Shuffling: Unified, Simplified, and Tightened}

\author{Shaowei Wang}
\affiliation{%
  \institution{Guangzhou University}
  \city{Guangzhou}
  \country{China}
}
\email{wangsw@gzhu.edu.cn}

\author{Yun Peng}
\affiliation{%
  \institution{Guangzhou University}
  \city{Guangzhou}
  \country{China}
}
\email{yunpeng@gzhu.edu.cn}

\author{Jin Li}
\affiliation{%
  \institution{Guangzhou University}
  \city{Guangzhou}
  \country{China}
}
\email{lijin@gzhu.edu.cn}

\author{Zikai Wen}
\affiliation{%
  \institution{Virginia Tech}
  \city{Virginia}
  \country{U.S.A}
}
\email{zkwen@vt.edu}

\author{Zhipeng Li}
\affiliation{%
  \institution{Guangzhou University}
  \city{Guangzhou}
  \country{China}
}
\email{2006200002@e.gzhu.edu.cn}

\author{Shiyu Yu}
\affiliation{%
  \institution{Guangzhou University}
  \city{Guangzhou}
  \country{China}
}
\email{2112106255@e.gzhu.edu.cn}

\author{Di Wang}
\affiliation{%
  \institution{King Abdullah University of Science and Technology}
  \city{Thuwal}
  \country{Saudi Arabia}
}
\email{di.wang@kaust.edu.sa}

\author{Wei Yang}
\affiliation{%
  \institution{University of Science and Technology of China}
  \city{Suzhou}
  \country{China}
}
\email{qubit@ustc.edu.cn}

\begin{abstract}
The shuffle model of differential privacy provides promising privacy-utility balances in decentralized, privacy-preserving data analysis. However, the current analyses of privacy amplification via shuffling lack both tightness and generality. To address this issue, we propose the \emph{variation-ratio reduction} as a comprehensive framework for privacy amplification in both single-message and multi-message shuffle protocols. It leverages two new parameterizations: the total variation bounds of local messages and the probability ratio bounds of blanket messages, to determine indistinguishability levels. Our theoretical results demonstrate that our framework provides tighter bounds, especially for local randomizers with extremal probability design, where our bounds are exactly tight. Additionally, variation-ratio reduction complements parallel composition in the shuffle model, yielding enhanced privacy accounting for popular sampling-based randomizers employed in statistical queries (e.g., range queries, marginal queries, and frequent itemset mining). Empirical findings demonstrate that our numerical amplification bounds surpass existing ones, conserving up to $30\%$ of the budget for single-message protocols, $75\%$ for multi-message ones, and a striking $75\%$-$95\%$ for parallel composition. Our bounds also result in a remarkably efficient $\tilde{O}(n)$ algorithm that numerically amplifies privacy in less than $10$ seconds for $n=10^8$ users.
\end{abstract}

\maketitle



\section{Introduction}

\new{In our increasingly digital world, safeguarding data privacy has become eminent, especially in sensitive sectors like census \cite{hirschman2000meaning}, healthcare \cite{raghupathi2014big}, and e-commerce \cite{bennett2007netflix}. As we gather data that contain references to identifiable personal information, the risk of privacy breaches emerges as a substantial threat to both individuals and organizations \cite{raghupathi2014big,liu2015data}. In response to this growing concern, and in light of stringent data privacy regulations~\cite{voigt2017eu,calzada2022citizens}, there is an urgent need for robust privacy protection measures, particularly in data analysis and learning. A promising answer lies in the concept of differential privacy (DP)~\cite{dwork2006differential}, a mathematical framework ensuring data privacy while maintaining data utility.}

\new{DP traditionally operates in two models: the curator model and the local model, each serving different types of data analysis needs. The curator model, widely used for tasks like count queries \cite{chen2011publishing,chen2014correlated,li2014data,li2015matrix}, data mining \cite{hu2015differential,bonomi2013mining,liu2021projected}, and general SQL engines \cite{johnson2018towards,kotsogiannis2019privatesql}, involves a trusted curator who applies privacy-preserving mechanisms before releasing data. The local model, on the other hand, offers privacy at the individual data source level, useful for queries like histogram query \cite{kairouz2016discrete,bao2021cgm}, range query \cite{2019answering,du2021ahead,ren2022ldp}, marginal query \cite{cormode2018marginal,xu2019dpsaas,xu2020collecting}, frequent itemset mining \cite{wang2018locally,wang2020set} and machine learning \cite{duchi2013local,wang2018empirical,wang2019sparse,truex2020ldp}. Despite their extensive applications in database, both models encounter difficulties in striking the balance between privacy and utility, particularly in decentralized settings where data control and processing are distributed among multiple participants.}


\new{The shuffle model of differential privacy \cite{bittau2017prochlo,erlingsson2019amplification} emerges as a better solution that provides more effective balance between privacy protection and data utility in decentralized environments (e.g., federated analytics~\cite{balle2020private,ghazi2020fewer,ghazi2021differentially,feldman2022hiding,cheu2022differentially} and federated machine learning~\cite{girgis2021shuffled,girgis2021shuffledtradeoffs,tenenbaum2021differentially,garcelon2022privacy}). In this model, taking the local $\epsilon_0$-DP mechanisms~\cite{wang2017locally,wang2019local,wang2022analyzing} as an instance, each user first applies a randomizer to transform their data into one or more messages. These messages are then sent to and mixed up by a ``shuffler'' before they being sent to the statistician (i.e., the server or the analyzer).  The shuffler is the key player who obscures the origins of each piece of data. This process, known as \emph{privacy amplification via shuffling}, significantly enhances the effectiveness of differential privacy by blocking the statistician from tracing data back to individual users.}


\new{Analyzing the level of privacy amplification is a critical step in maintaining a balanced trade-off between privacy protection and data utility. This analysis, however, faces significant challenges due to the expansive message space involved and the complex, non-linear nature of the shuffling operations on $n$ messages. This complexity stands in contrast to simpler operations like summation in traditional DP approaches \cite{mironov2009computational,bonawitz2017practical}. }


\new{To address these challenges, there has been considerable research effort. A notable research by Erlingsson \textit{et al.}~\cite{erlingsson2019amplification} demonstrated when each of the $n$ shuffled messages is randomized using $\epsilon_0$-LDP, they collectively maintain differential privacy at a level of $O({\epsilon_0}\sqrt{{\log(1/\delta)}/{n}})$, with a high probability of $1-\delta$, given that $\epsilon_0$ is of order $O(1)$. For the more practical case of $\epsilon_0=\Theta(1)$, another research \cite{balle2019privacy} proposed viewing messages from other users as a ``privacy blanket'' that conceals each user's sensitive information. This approach showed that the shuffled messages preserve $(O(\min\{\epsilon_0,1\}e^{\epsilon_0}\sqrt{\log(1/\delta)/{n}}),\delta)$-DP. Recent research~\cite{feldman2022hiding,feldman2023stronger} introduced a ``clone'' concept to examine the indistinguishability offered by messages from other users, achieving an amplification bound of $(O(e^{\epsilon_0/2}\sqrt{\log(1/\delta)/{n}}),\delta)$-DP for $\epsilon_0=\Theta(1)$.}



\new{While these analyses are significant, they have limitations in tightness and generality. The ``privacy blanket'' framework \cite{balle2019privacy}, for example, provides tighter parameters for specific $\epsilon_0$-LDP randomizers. However, it is less precise in tail-bounding privacy loss variables, even when using their ${\Omega}(n)$-complexity numerical bounds. Similarly, the recent ``clone'' reduction \cite{feldman2023stronger} achieves nearly optimal amplification bounds for general $\epsilon$-LDP randomizers. Yet, it falls short in offering precise amplification for common LDP mechanisms, such as Hadamard response \cite{acharya2018hadamard}, subset selection \cite{wang2019local}, and PrivUnit \cite{bonawitz2017practical}. Consequently, these approaches do not fully assist practitioners in setting optimal privacy parameters to maximize data utility. Additionally, their results are only applicable to (single-message) shuffle protocols with LDP randomizers, which face intrinsic utility drawbacks \cite{ghazi2021power}. Such drawbacks are not as pronounced in protocols that utilize the metric DP notion \cite{chatzikokolakis2013broadening}, or multi-message protocols where each user contributes multiple messages \cite{balle2020private}.}

\new{In the context of multi-message shuffle model, current works leverage random numerical shares \cite{ghazi2020private,ghazi2021differentially} or categorical shares \cite{balle2020private,luo2022frequency,li2023privacy} to collaboratively preserve privacy post-shuffling. Nevertheless, these protocols, including those in \cite{cheu2022differentially,luo2022frequency,wang2023shuffle,li2023privacy}, analyze DP guarantees on a case-by-case basis, offering only approximate/loose analytic bounds. This hampers their generalizability across different protocols and leads to undesirable practical utility.}


\subsection{Our Contributions}
\new{Recognizing the aforementioned limitations, our research introduces two advancements in analyzing the privacy amplification of differentially private data processing.}

First, we designed a unified, tight, and scalable framework for analyzing privacy amplification within the shuffle model that supports simutaneously LDP randomizers, metric LDP randomizers, and various multi-message protocols. This framework, which we term the \emph{variation-ratio} reduction, innovatively links DP levels to two novel yet intuitive parametrizations: the pairwise total variation and the one-direction probability ratio of local randomizers. It demonstrates exact tightness for an extensive family of randomizers that exhibit extremal properties. This includes a variety of state-of-the-art $\epsilon_0$-LDP randomizers like local hash \cite{wang2017locally}, Hadamard response \cite{acharya2018hadamard}, the PrivUnit mechanism \cite{bhowmick2018protection}, and several multi-message protocols \cite{balcer2021connecting,cheu2022differentially, luo2022frequency,li2023privacy}. Moreover, our framework offers precise bounds with a computational complexity of \(\tilde{O}(n)\), making it highly effective for large-scale data analyses, such as \new{telemetry data} involving millions or billions of users.

Second, we derived tight privacy amplification bounds for parallel local randomizers (i.e., sampling-based composite randomizers). These types of randomizers are widely deployed by differentially private protocols in \new{user data collecting/queries}, such as count queries \cite{ren2018lopub,cormode2019answering,wang2019answering}, heavy hitter identification \cite{qin2016heavy,bassily2017practical,wang2019locally}, frequent itemset mining \cite{wang2018locally}, and machine learning \cite{jagannathan2009practical,fletcher2019decision}. \new{Within the variation-ratio framwork, our analysis reveals that the pairwise total variation bound of a parallel randomizer does not exceed the expected bound of its individual base randomizers. This insight forms the basis of our \emph{advanced parallel composition theorem} in the shuffle model, enabling much tighter amplification bounds compared to straightforward approaches.}



\subsection{Organization}
\new{The remaining paper is organized as follows. Section \ref{sec:related} reviews related works. Section \ref{sec:pre} provides background knowledge. Section \ref{sec:framework} presents main theoretical upper bounding results, the corresponding proof sketch, and an efficient numerical bounding algorithm. Section \ref{sec:lowerbounds} establishes amplification lower bounds. Section \ref{sec:parallel} improves bounding results for parallel randomizers. Section \ref{sec:exp} shows experimental results. Finally, Section \ref{sec:conclusion} concludes the paper.
}

\section{Related Work}\label{sec:related}

\textbf{Privacy amplification of LDP randomizers. } The seminal work \cite{erlingsson2019amplification} utilizes the privacy amplification via subsampling \cite{kasiviswanathan2011can,balle2018subsampling} to analyze the shuffling amplification, and shows $n$ shuffled messages satisfies $(\epsilon_0\sqrt{144\log(1/\delta)/n}, \delta)$-DP. Latterly, the privacy blanket \cite{balle2019privacy} proposes extracting an input-independent part from the output distribution, to work as a ``blanket'' to amplify privacy. For general LDP randomizers, the \cite{balle2019privacy} shows the input-independent part is sampled by each user with probability at least $e^{-\epsilon_0}$ (they term it as total variation similarity). For specific LDP randomizers (e.g., Laplace \cite{dwork2006calibrating} and generalized randomized response \cite{warner1965randomized}), the total variation similarity can be larger. Recently, the works \cite{feldman2022hiding} and \cite{feldman2023stronger} decompose output distributions into mixture distributions with $3$ options, and interprets messages from other users as clones of victim user. This clone reduction is near-optimal w.r.t. the dependence on $\epsilon_0$ for general LDP randomizers. Meanwhile, in order to derive tighter amplification bounds for specific randomizers, it needs to explicitly construct mixture distributions and determine the clone probabilities, which however is intractable for commonly deployed randomizers (e.g., Laplace mechanism, subset selection mechanism \cite{wang2019local,ye2018optimal}, PrivUnit mechanism \cite{bonawitz2017practical}, and sampling-based composite randomizers). As comparision, our framework using a new parametrization (i.e., pairwise total variation) of randomizers to implicitly derive tighter clone probabilities. Besides, the clone reductions in research~\cite{feldman2022hiding,feldman2023stronger} are limited to LDP randomizers, while our framework generalizes to metric-based LDP randomizers and multi-message protocols. Even for LDP randomizers in the shuffle model, our asymptotic amplification bounds are provably tighter than existing results (see Table \ref{tab:bounds}).




\textbf{Privacy amplification of metric-based LDP randomizers. } The metric-based LDP \cite{chatzikokolakis2013broadening} permits a flexible indistinguishable level between any pair of inputs $a,b\in \dom{X}$ (denoted as $d_\dom{X}(a,b)$), and is widely used for user data with a large domain range such as location data \cite{andres2013geo} and unstructed data \cite{zhao2022survey}. Previous work \cite{wang2023SMDP} initiated the study of metric DP in the shuffle model and provided several amplification bounds, which depend on $\max_{c\in \dom{X}} e^{d_\dom{X}(a,c)}+e^{d_\dom{X}(b,c)}$. In our work, we apply the proposed framework to metric LDP randomizers and improve the dependence to $(1-e^{-d_\dom{X}(a,b)})\cdot\max_{c\in \dom{X}} \max[e^{d_\dom{X}(a,c)},e^{d_\dom{X}(b,c)}]$ (a lower value indicates stronger amplification effects, see Section \ref{subsec:params} for detail).

 \textbf{Privacy amplification of multi-message randomizers. } For multi-message protocols in the shuffle model (e.g., \cite{cheu2022differentially,luo2022frequency,li2023privacy}), the local randomizer may neither satisfy LDP nor metric-based LDP. Hence, designated analyses of privacy guarantees are often conducted. Our proposed framework yields much tighter privacy guarantees and reduces the budget by $70\%$-$85\%$ compared to these designated analyses (see detail in Section \ref{sec:exp}).

 \textbf{Privacy amplification via shuffling under composition. } To analyze accumulated privacy loss under sequential composition for the shuffle model~\cite{girgis2021renyi,girgis2021shuffled,girgis2021subsampled,koskela2021tight,feldman2023stronger}, it is common to use analytic/numerical tools from the centralized model of differential privacy (e.g., strong composition
theorem \cite{dwork2008differential}, R\'enyi differential privacy \cite{mironov2017renyi}, and Fourier accountant \cite{koskela2020computing}). The \cite{girgis2021renyi} analyzes R\'enyi differential privacy of shuffled messages, \cite{girgis2021subsampled} further composes privacy for the shuffle model with subsamplings. To the best of our knowledge, we are the first to provide tight privacy amplification results for prevalent parallel composition (see detail in Section \ref{sec:parallel}).



\section{Preliminaries}\label{sec:pre}
\begin{definition}[Hockey-stick divergence]\label{def:hsd} The Hockey-stick divergence (with parameter $e^{\epsilon}$)  between two random variables $P$ and $Q$ is defined by:
\[D_{e^{\epsilon}}(P \| Q)=\int\max\{0,P(x)-e^\epsilon Q(x)\} \mathrm{d}x,\]
where we use the notation $P$ and $Q$ to refer to both the random variables and their probability density functions.
\end{definition}


Two variables $P$ and $Q$ are $(\epsilon, \delta)$-indistinguishable iff $$\max\{D_{e^\epsilon}(P\| Q), D_{e^\epsilon}(Q\| P)\}\leq \delta.$$ If two datasets have the same size and differ only by the data of a single individual, they are referred to as neighboring datasets. Differential privacy ensures that the divergence of query outputs on neighboring datasets is constrained (see Definition \ref{def:dp}). Similarly, in the local setting where a single individual's data is taken as input, we present the definition of local $(\epsilon,\delta)$-DP in Definition \ref{def:ldp}. \new{The notation $\epsilon$-LDP is used when the failure probability $\delta=0$}.


\begin{definition}[Differential privacy \cite{dwork2006calibrating}]\label{def:dp}
An algorithm $\mech{R}:\dom{X}^n\mapsto \dom{Z}$ satisfies $(\epsilon,\delta)$-differential privacy iff for all neighboring datasets $X, X'\in \dom{X}^n$, $\mech{R}(X)$ and $\mech{R}(X')$ are $(\epsilon,\delta)$-indistinguishable.
\end{definition}

\begin{definition}[Local differential privacy \cite{kasiviswanathan2011can}]\label{def:ldp}
An algorithm $\mech{R}:\dom{X}\mapsto \dom{Y}$ satisfies local $(\epsilon,\delta)$-differential privacy iff for all $x,x'\in \dom{X}$, $\mech{R}(x)$ and $\mech{R}(x')$ are $(\epsilon,\delta)$-indistinguishable.
\end{definition}

\textbf{Metric differential privacy. } A distance measure $d_{\dom{X}}: \dom{X}\times\dom{X}\mapsto \mathbb{R}$ over $\dom{X}$ characterizes the distances between any pair of elements $a,b\in \dom{X}$. For this measure to be deemed a metric, it must adhere to the subsequent properties:
\begin{enumerate}
\setlength{\itemsep}{0pt}
\setlength{\parsep}{0pt}
\setlength{\parskip}{0pt}
    \item[i.]\textit{Self-identity: } {For any $a\in \dom{X}$, $d_{\dom{X}}(a,a)=0$.}
    \item[ii.]\textit{Positivity: } {For any $a,b\in \dom{X}$ that $a\neq b$, $d_{\dom{X}}(a,b)>0$.}
    \item[iii.]\textit{Symmetry: } {For any $a,b\in \dom{X}$, $d_{\dom{X}}(a,b)=d_{\dom{X}}(b,a)$.}
    \item[iv.]\textit{Triangle inequality: } {For any $a,b,c\in \dom{X}$, $d_{\dom{X}}(a,b)+d_{\dom{X}}(b,c)\geq d_{\dom{X}}(a,c)$.}
\end{enumerate}

The concept of metric differential privacy expands upon the traditional differential privacy, enabling the calibration of indistinguishability constraints to element dissimilarity.

\begin{definition}[Metric $(d_{\dom{X}},\delta)$-differential privacy \cite{chatzikokolakis2013broadening}]\label{def:mcdp}
A randomized mechanism $\mech{R}$ satisfies $(d_{\dom{X}}, \delta)$-differential privacy iff for any neighboring datasets $X, X'\in \dom{X}^n$ (i.e., $x_i=a$ in $X$ and $x_i=b$ in $X'$), $\mech{R}(X)$ and $\mech{R}(X')$ are $(d_{\dom{X}}(a,b), \delta)$-indistinguishable.
\end{definition}

\new{In the local setting, each user sanitizes their data on their own locally. Substituting the dataset $S$ in Definition \ref{def:mcdp} with an individual data item, the definition of metric LDP is as follows:}
\begin{definition}[Local metric $d_{\dom{X}}$-differential privacy \cite{andres2013geo,alvim2018local}]\label{def:mldp}
Let $\dom{D}_{\mech{R}}$ denote the output domain, a randomized mechanism $\mech{R}$ satisfies local metric $d_{\dom{X}}$-privacy iff for any $a,b \in \mathcal{X}$, $\mech{R}(a)$ and $\mech{R}(b)$ are $(d_{\dom{X}}(a,b), 0)$-indistinguishable.
\end{definition}


\subsection{The Shuffle Model of Differential Privacy}
We use $[n]$ to denote $\{1,...,n\}$ and $[i:j]$ to denote $\{i,i+1,...,j\}$. Follow conventions in the shuffle model based on randomize-then-shuffle \cite{cheu2019distributed,balle2019privacy}, we define a single-message protocol $\mech{P}$ to be a list of algorithms $\mech{P} = (\{\mech{R}_i\}_{i\in [n]}, \mech{A})$, where $\mech{R}_i: \dom{X} \to \dom{Y}$ is the local randomizer of user $i$, and $\mech{A}: \dom{Y}^n \to \dom{Z}$ the analyzer in the data collector's side. 
The overall protocol implements a mechanism $\mech{P} : \dom{X}^n \to \dom{Z}$ as follows.
Each user $i$ holds a data record $x_i$ and a local randomizer $\mech{R}_{i}$, then computes a message $y_i = \mech{R}_i(x_i)$.
The messages $y_1,...,y_n$ are then shuffled and submitted to the analyzer. We write $\mech{S}(y_1,\ldots,y_n)$ to denote the random shuffling step, where $\mech{S} : \dom{Y}^n \to \dom{Y}^n$ is a \emph{shuffler} that applies a uniform-random permutation to its inputs.
In summary, the output of $\mech{P}(x_1, \ldots, x_n)$ is denoted by $\mech{A} \circ \mech{S} \circ \mech{R}_{[n]}(X) = \mech{A}(\mech{S}(\mech{R}_1(x_1), \ldots, \mech{R}_n(x_n)))$.

The shuffle model assumed that all parties involved in the protocol follow it faithfully and there is no collusion between them. From a privacy perspective, the goal is to ensure the differential privacy of the output $\mech{P}(x_1, \ldots, x_n)$ for any analyzer $\mech{A}$. By leveraging the post-processing property of the Hockey-stick divergence, it suffices to ensure that the shuffled messages $\mech{S} \circ \mech{R}_{[n]}(X)=\mech{S}(\mech{R}_1(x_1), \ldots, \mech{R}_n(x_n))$ are differentially private. We formally define differential privacy in the shuffle model in Definition \ref{def:sdp}.

\begin{definition}[Differential privacy in the shuffle model]\label{def:sdp}
A protocol $\mech{P} = (\{\mech{R}_i\}_{i\in [n]}, \mech{A})$ satisfies $(\epsilon,\delta)$-differential privacy in the shuffle model iff for all neighboring datasets $X, X'\in \dom{X}^n$, the $\mech{S} \circ \mech{R}_{[n]}(X)$ and $\mech{S} \circ \mech{R}_{[n]}(X')$ are $(\epsilon,\delta)$-indistinguishable.
\end{definition}

In multi-message shuffle protocols, which offer greater utility potential than single-message counterparts \cite{ghazi2021power}, each user can submit multiple messages. These messages from all users are then randomly shuffled. The differential privacy applied in this context is determined w.r.t. a single alteration in the input dataset, analogous to the approach in single-message protocols.

\section{The Variation-ratio Framework}\label{sec:framework}
In this section, we introduce our variation-ratio reduction framework for privacy amplification via shuffling. Our approach leverages the mixture decomposition of local randomizers, as introduced in the privacy blanket \cite{balle2019privacy} and clone reduction \cite{feldman2022hiding,feldman2023stronger} literature. \new{The mixture decomposition of local randomizers reveal two important facts: (1) randomized and anonymously shuffled messages from other users can mimic as the message from a particular (victim) user with some probability, which are termed as blanket messages in \cite{balle2019privacy} and clones in \cite{feldman2022hiding,feldman2023stronger} and can thus amplify privacy; (2) for the local randomizer of the victim user that pertain to some properties (e.g., $(\epsilon,0)$-LDP), the probability distribution of outputs given arbitrary inputs share some resemblance. The key to analyzing tighter shuffle privacy amplification would be characterizing more concise mimic probability from other users and the probability resemblance of the victim's outputs.} \new{Compared to existing results}, our framework achieves a highly precise mixture decomposition by utilizing both $(0,\delta)$-LDP and $(\epsilon_0, 0)$-LDP properties of local randomizers \new{to capture the the probability resemblance of victim's outputs (over differed inputs), and introduces the probability ratio as simple yet effective indicates to the mimic power from other users.}

To establish our approach, without loss of generality, we consider two neighboring datasets $X$ and $X'$ that differ only in the first user's data \new{(i.e. the first user is considered as the victim in DP)}, i.e., $X=\{x_1=x_1^0,x_2,\ldots,x_n\}$ and $X'=\{x_1=x_1^1,x_2,\ldots,x_n\}$, where $x_1^0,x_1^1,x_2,\ldots,x_n\in \dom{X}$. We define the following properties on (independent) local randomizers $\{\mech{R}_i\}_{i\in [n]}$ with some parameters $p>1$, $\beta\in [0, \frac{p-1}{p+1}]$, and $q\geq 1$:
\begin{itemize}
\item[I.] \textit{$(p, \beta)$-variation property}: we say that the $(p, \beta)$-variation property holds if $D_{p}(\mech{R}_1(x_1^0)\| 
\mech{R}_1(x_1^1))=0$ and $D_{e^0}(\mech{R}_1(x_1^0)\| \\
\mech{R}_1(x_1^1))\leq \beta$ for all possible $x_1^0,x_1^1\in \dom{X}$.
\item[II.] \textit{$q$-ratio property}: we say that the $q$-ratio property holds if $D_{q}(\mech{R}_1(x_1)\| \mech{R}_i(x_i))=0$ for all possible $x_1,\dots,x_n\in \dom{X}$ and all $\{\mech{R}_i\}_{i\in [2:n]}$.
\end{itemize}

The parameter $\beta$ corresponds to the pairwise total variation distance and serves as an indicator of the degree to which $\mech{R}_1$ satisfies $(0,\beta)$-LDP. Conversely, the parameter $p$ relates to the divergence ratio and serves as an indicator of the $(\log p,0)$-LDP property satisfied by $\mech{R}_1$. Meanwhile, the parameter $q$ captures the ability of $\mech{R}_i(x_i)$ to mimic $\mech{R}_1(x_1)$. For any randomizer $\mech{R}_1$, there exists a dominating pair of distributions \cite[Proposition 8]{zhu2022optimal}, such that the divergence upper bound (including the $(p,\beta)$ parameters) can be directly derived upon. Specificlly, in case the randomizer $\mech{R}_1$ satisfy $\epsilon_0$-LDP, we always have $p\leq \exp(\epsilon_0)$ and $\beta \leq \frac{e^{\epsilon_0}-1}{e^{\epsilon_0}+1}$ \cite{kairouz2014extremal}. Most commonly-used $\epsilon_0$-LDP randomizers have a lower total variation bound $\beta$ than the worst-case $\frac{e^{\epsilon_0}-1}{e^{\epsilon_0}+1}$ (see Tables \ref{tab:parametersldp}), thus permits a tighter mixture decomposition of their shuffled outputs.


\subsection{Main Results}
Our objective is to establish an upper bound on the divergence between the shuffled messages resulting from two independent protocol runs: $\mech{S}(\mech{R}_1(x_1^0),\ldots,\mech{R}_n(x_n))$ and $\mech{S}(\mech{R}_1(x_1^1),\ldots,\mech{R}_n(x_n))$, where the local randomizers $\{\mech{R}_i\}_{i\in [n]}$ satisfy the properties mentioned earlier. We accomplish this through an implicit analysis of the mixture decomposition of these local randomizers, \new{then} utilizing the data processing inequality to obtain a dominating pair of binomial counts. Subsequently, we represent the Hockey-stick divergence as an expectation of cumulative probabilities over a binomial variable $c$ (see Theorem \ref{the:dbound}), exploiting the monotone of the integral in Hockey-stick divergence. The notation $\mathsf{CDF}_{c,1/2}[c_1, c_2]$ represents the cumulative probability $\sum_{i\in [c_1,c_2]}{c\choose i}/2^{c}$, which can be quickly computed using two calls to the regularized incomplete beta function \cite{paris2010incomplete}. Therefore, Theorem \ref{the:dbound} directly implies an efficient algorithm for numerically compute Hockey-stick divergence within $\tilde{O}(n)$ complexity (see Section \ref{subsec:numericalupper} for detail).  

\begin{theorem}[Divergence upper bound]\label{the:dbound}
For $p> 1, \beta\in [0, \frac{p-1}{p+1}],\\q\geq 1$, if randomizers $\{\mech{R}_i\}_{i\in [n]}$ satisfy the $(p, \beta)$-variation property and the $q$-ratio property, then for any $x_1^0,x_1^1,x_2,...,x_n\in \dom{X}$:
\begin{alignat*}{2}
&D_{e^\epsilon}(\mech{S}(\mech{R}_1(x_1^0),\ldots,\mech{R}_n(x_n))\ \|\ \mech{S}(\mech{R}_1(x_1^1),\ldots,\mech{R}_n(x_n))) \\
\leq & \mathop{\mathbb{E}}\limits_{c\sim Binom(n-1,2r)}\Big[(p-e^{\epsilon})\alpha\cdot \mathop{\mathsf{CDF}}\limits_{c,1/2}[\lceil low_{c+1}-1\rceil, c] \\
&\ \ \ \ \ \ \ \ \ \ \ \ \ \ \ \ \ \ \ \ \ \ \ \ \ \  +(1-p e^\epsilon)\alpha\cdot\mathop{\mathsf{CDF}}\limits_{c,1/2}[\lceil low_{c+1}\rceil, c] \\
&\ \ \ \ \ \ \ \ \ \ \ \ \ \ \ \ \ \ \ \ \ \ \ \ \ \  +(1-e^\epsilon)(1-\alpha-p\alpha)\cdot \mathop{\mathsf{CDF}}\limits_{c,1/2}[\lceil low_c\rceil, c]\Big],
\end{alignat*}\normalsize
where $\alpha=\frac{\beta}{p-1}$, $r=\frac{\alpha p}{q}$, $low_c=\frac{(e^{\epsilon}p-1)\alpha c+(e^{\epsilon}-1)(1-\alpha-\alpha p)\cdot \frac{(n-c)r}{1-2r}}{\alpha(e^{\epsilon}+1)(p-1)}$.
\end{theorem}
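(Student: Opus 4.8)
The plan is to mimic the privacy-blanket/clone decomposition, but to push it through purely at the level of the variation and ratio parameters.

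First I would decompose the distinguished user's randomizer. Writing $P_0=\mech{R}_1(x_1^0)$ and $P_1=\mech{R}_1(x_1^1)$, the $(p,\beta)$-variation property gives pointwise $P_0\le p\,P_1$ (and symmetrically) together with $\int(P_0-P_1)^+\le\beta$. Setting $U_0=(P_0-P_1)^+/\beta$ and $U_1=(P_1-P_0)^+/\beta$ (disjoint supports, each a probability measure) and $\alpha=\beta/(p-1)$, I claim the three-component mixtures $P_0=\alpha p\,U_0+\alpha\,U_1+(1-\alpha-\alpha p)W$ and $P_1=\alpha\,U_0+\alpha p\,U_1+(1-\alpha-\alpha p)W$ hold for a common residual $W$. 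The only thing to check is that the residual mass $\min(P_0,P_1)-\tfrac{1}{p-1}|P_0-P_1|$ is nonnegative; on $\mathrm{supp}(U_0)$ this reads $P_1-\tfrac{1}{p-1}(P_0-P_1)\ge0$, which is exactly the ratio bound $P_0\le pP_1$, while the total residual weight $1-\alpha-\alpha p\ge 0$ is exactly $\beta\le\frac{p-1}{p+1}$. Thus both halves of the variation property are consumed here, and this is the step where the decomposition would fail if either parameter were relaxed. For every other user, the $q$-ratio property gives $\alpha p\,U_b\le P_b\le q\,\mech{R}_i(x_i)$ for each $b$, hence $\mech{R}_i(x_i)\ge \tfrac{\alpha p}{q}(U_0+U_1)=r(U_0+U_1)$ with $r=\alpha p/q$, so $\mech{R}_i(x_i)=r\,U_0+r\,U_1+(1-2r)V_i$.

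Next I would carry out the clone reduction. Since $D_{e^\epsilon}$ obeys the data-processing inequality and a general randomizer is obtained from the extremal (disjoint-support, common-residual) configuration by a message-wise resampling channel, it suffices to bound the extremal case; there the shuffled output is equivalent to the pair of counts $(N_0,N_1)$ of $U_0$- and $U_1$-type messages, since all remaining messages are exchangeable draws from a common residual and carry no dependence on $b$. The only $b$-dependent randomness is user $1$'s component choice: under $x_1^0$ it lands in $U_0,U_1,W$ with probabilities $(\alpha p,\alpha,1-\alpha-\alpha p)$ and under $x_1^1$ with $(\alpha,\alpha p,1-\alpha-\alpha p)$, while the other $n-1$ users each feed the $U_0/U_1$ pool independently with probability $2r$, splitting evenly. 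Hence the number of ``clone'' blanket messages is $C\sim \mathrm{Binomial}(n-1,2r)$ and, conditioned on $C=c$, the $U_0$-count among them is $\mathrm{Binomial}(c,1/2)$ — this is the source of the outer expectation and of $\mathsf{CDF}_{c,1/2}$.

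It then remains to evaluate $D_{e^\epsilon}=\sum_{a,s}\big(\Pr_0(N_0=a,N_1=s)-e^\epsilon\Pr_1(N_0=a,N_1=s)\big)^+$. Writing $g(a,s)=\Pr(C_0=a,C_1=s)$, the signed summand factors as $\alpha(p-e^\epsilon)g(a-1,s)+\alpha(1-e^\epsilon p)g(a,s-1)+(1-\alpha-\alpha p)(1-e^\epsilon)g(a,s)$, so the three coefficients in the statement are exactly $\Pr_0(j)-e^\epsilon\Pr_1(j)$ for user $1$'s three choices $j\in\{U_0,U_1,W\}$. I would show the likelihood ratio is monotone in $a$ along each slice $a+s=m$, so that the acceptance region $\{\Pr_0>e^\epsilon\Pr_1\}$ is a threshold $\{a\ge low_m\}$; substituting $g(a-1,s)/g(a,s)=\frac{a}{n-a-s}\cdot\frac{1-2r}{r}$ into the boundary equation $\Pr_0=e^\epsilon\Pr_1$ and solving the resulting affine relation for $a$ yields precisely the stated $low_m$ (with $\epsilon'=\epsilon$). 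Re-indexing each of the three sums by the clone total $c$ (so that the $U_0$ term carries total $m=c+1$ under the shift $a\mapsto a-1$, the $U_1$ term total $m=c+1$, and the $W$ term total $m=c$) turns each inner sum into $\mathsf{CDF}_{c,1/2}$ over $[\lceil low_{c+1}-1\rceil,c]$, $[\lceil low_{c+1}\rceil,c]$, and $[\lceil low_c\rceil,c]$ respectively, and the outer weight $\Pr(C=c)$ assembles the expectation.

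The hardest part is the reduction step rather than the arithmetic: making rigorous that the disjoint-support configuration is the worst case, so that passing to the two-dimensional count $(N_0,N_1)$ is a genuine \emph{upper} bound and not merely the lower bound that a naive push-forward of the divergence would give. Concretely, one must argue that any randomizer meeting the two properties is a post-processing of the extremal one — resampling each message from the real component distributions once its latent component is identified — and that differing residuals $W,V_i$ can only shrink the positive part through support overlap (as is already visible in the $n=1$ check, where overlap induces cancellation). The secondary technical point is the monotone-likelihood-ratio claim that pins the acceptance region to a single threshold $low_m$ per slice; once that is in hand, the derivation of $low_m$ and the CDF bookkeeping are routine.
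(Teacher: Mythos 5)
Your route is essentially the paper's: decompose $\mech{R}_1(x_1^0),\mech{R}_1(x_1^1)$ into a three-component mixture driven by the variation property, use the ratio property to extract a $2r$-mass "blanket" from every other user, reduce by data processing to the bivariate count $(N_0,N_1)$ (the paper's Theorem \ref{the:reduction} via Lemmas \ref{lemma:mixture} and \ref{lemma:clone}), and then evaluate the Hockey-stick divergence exactly using the monotone likelihood ratio along slices $a+b=c$ (the paper's Theorem \ref{the:hsdnumerical}). Your coefficient identification, the ratio $g(a-1,s)/g(a,s)=\frac{a}{n-a-s}\cdot\frac{1-2r}{r}$, and the re-indexing that produces the three $\mathsf{CDF}_{c,1/2}$ ranges all check out against the paper's Appendix B.

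There is, however, one concrete gap. You declare $U_0=(P_0-P_1)^+/\beta$ and $U_1=(P_1-P_0)^+/\beta$ to be probability measures, which holds only if the actual total variation $\beta'=D_1(\mech{R}_1(x_1^0)\|\mech{R}_1(x_1^1))$ equals $\beta$. The variation property only guarantees $\beta'\le\beta$; when the inequality is strict your $U_0,U_1$ have mass $\beta'/\beta<1$, the residual $W$ has mass exceeding $1$, and the latent-component sampling interpretation (probabilities $(\alpha p,\alpha,1-\alpha-\alpha p)$ for user $1$, $(r,r,1-2r)$ for the blanket, $C\sim\mathrm{Binomial}(n-1,2r)$) breaks down. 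The paper handles this by carrying out the decomposition with $\alpha'=\beta'/(p-1)$, obtaining a bound in terms of $P^{q}_{\beta',p}$ and $Q^{q}_{\beta',p}$, and then invoking a separate monotonicity result (Lemma \ref{lemma:nondecrease}): $D(P^{q}_{\beta',p}\|Q^{q}_{\beta',p})\le D(P^{q}_{\beta,p}\|Q^{q}_{\beta,p})$ for $\beta'\le\beta$, proved by exhibiting $P^{q}_{\beta',p}$ as a coordinatewise binomial thinning (hence a post-processing) of $P^{q}_{\beta,p}$. Your argument needs this extra step — or an explicit padding of $U_0,U_1$ with common mass, which is delicate because the padding must also respect the inequality $r(U_0+U_1)\le\mech{R}_i(x_i)$ — to reach the stated formula, which is written with $\alpha=\beta/(p-1)$ and $r=\alpha p/q$ rather than their $\beta'$-counterparts. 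The remaining concern you raise (that the count pair is an \emph{upper} bound, not just a push-forward lower bound) is real but resolved exactly as you suspect: the paper's Lemma \ref{lemma:clone} gives an explicit reconstruction of the shuffled latent sequence from $(a,b)$ that is identical under both hypotheses, so the two distributions are mutual post-processings of one another.
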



Asymptotically, when $n$ is sufficiently large, we present two analytical amplification bounds: one that is sophisticated, and the other that is more concise but less tight. Specifically, we derive the former bound in Theorem \ref{the:tightlevel}. To achieve this, we tail bound the variable $c$ (i.e. number of clones in the former theorem), and exploit the fact that the statistical divergence monotonically non-increases with $c$. The resulting bound is satisfactorily tight for a wide range of variation-ratio parameters (see Section \ref{subsec:multimessage} for comparison with numerical upper bounds). To tail bound $c$, we use multiplicative Chernoff bounds and Hoeffding's inequality, which induce tighter bounds for small clone probability $2r$ (in local $d_{\dom{X}}$-DP randomizers) and for large $2r$ (in multi-message protocols), respectively.

\begin{theorem}[Analytic privacy amplification bounds]\label{the:tightlevel}

For $p > 1$, $\beta \in [0, \frac{p-1}{p+1}]$, and $q \geq 1$, the $P^{q}_{p,\beta}$ and $Q^{q}_{p,\beta}$ are $(\epsilon, \delta)$-indistinguishable with 
\small
\begin{align*}
\epsilon = \log\Bigg(1+\frac{\beta(2\sqrt{\Omega \log(\frac{4}{\delta})/2}+1)+\beta(\Omega/2-\sqrt{\Omega\log(\frac{4}{\delta})/2})}{\alpha\Omega} \\
+ \frac{(1-\alpha-\alpha p)(n-1-\Omega)r}{(1-2r)\alpha\Omega}\Bigg)
\end{align*}\normalsize
when $(p+1)\alpha/2-(1-\alpha-\alpha p)r/(1-2r) \geq 0$ and 
$\Omega = 2r(n-1)-\sqrt{\min(6r,\frac{1}{2})(n-1)\log(\frac{4}{\delta})} \geq \frac{2p(\beta+1+(\beta-1)p)(n-1)+\beta}{q+p(\beta-1+(\beta+1)p)-pq}$, where $\alpha = {\beta}/({p-1})$ and $r = {p\alpha}/{q}$.
\end{theorem}

We also derive a succinct formula in Theorem \ref{the:level} by further tail bounding the number of clones $c$. 

\begin{theorem}[Asymptotic privacy amplification bounds]\label{the:level}
For $p> 1, \beta\in [0, \frac{p-1}{p+1}],  q\geq 1$, when $n\geq \frac{8\log(2/\delta)(p-1)q}{\beta p}$, the $P^{q}_{p,\beta}$ and $Q^{q}_{p,\beta}$ are $(\epsilon,\delta)$-indistinguishable with 
\begin{align*}
\epsilon = \log\Bigg(1+ & \frac{\beta }{(1-v)(1+p)\beta/(p-1)+v}\Big(\sqrt{\frac{32\log(\frac{4}{\delta})}{r(n-1)}}+\frac{4}{rn}\Big)\Bigg),
\end{align*}
where $r=\frac{p\beta}{(p-1)q}$ and $v=\max\{0, \frac{4}{9}\frac{1-3r}{1-2r}\}$.
\end{theorem}

When $n$ is sufficiently large and a mild condition $r=\frac{p\beta}{(p-1)q}\leq 1/4$ holds, we have the $v$ in Theorem \ref{the:level} no less than a constant value $2/9$, thus the privacy can be amplified to: $$\tilde{O}\big(\sqrt{\beta(p-1)q/(p n)}\big).$$ Therefore, when $\beta$ and $q$ are relatively small, there are strong privacy amplification effects. This corresponds to qualitative interpretation about $\beta$ and $q$: the parameter $\beta$ indicates the total variation distance between $R_1(x_1^0)$ and $R_1(x^1_1)$, so when $\beta$ is relatively low, the statistical distance of shuffled messages would also be low; a lower $q$ indicates that messages from other users (i.e. users $[2:n]$) can better mimic $R_1(x^0_1)$ and $R_1(x^1_1)$, thus reduces the chance an adversary can disguise between $x^0_1$ and $x^1_1$ from shuffled messages.

Specifically, for randomizers satisfying $\epsilon_0$-LDP, we have  $q=p=\exp(\epsilon_0)$, thus the asymptotic amplified DP level becomes: $$\tilde{O}(\sqrt{\beta (\exp(\epsilon_0)-1)/n}).$$ In Table \ref{tab:bounds}, we compare our asymptotic bound with existing asymptotic bounds found in research~\cite{erlingsson2019amplification,balle2019privacy,feldman2022hiding,feldman2023stronger}, under the practical setting where $\epsilon_0=\Theta(1)$. Since the total variation bound $\beta$ is often significantly lower than the worst-case value of $\frac{e^{\epsilon_0}-1}{e^{\epsilon_0}+1}$, our amplification bound is provably tighter than previously known bounds.

\begin{table}[h]
\setlength{\tabcolsep}{0.1em}
\renewcommand{\arraystretch}{1.3}
\caption{Comparison of privacy amplification bound on shuffled $\epsilon_0$-LDP messages when $\epsilon_0=\Theta(1)$ and $n\geq \tilde{\Omega}(\exp(\epsilon_0))$.}
\vspace*{-0.7em}
\centering
\label{tab:bounds}
\begin{tabular}{|c|c|}
\hline
\multicolumn{1}{|c|}{\textbf{Method}}
& 
\multicolumn{1}{c|}{\textbf{Asymptotic Amplification Bound}} \\
\hline
EFMRTT19 \cite{erlingsson2019amplification}
& 
${O}(\sqrt{\exp(3\epsilon_0)\log(1/\delta)/n})$ \\ 
\hline
privacy blanket \cite{balle2019privacy}
&
${O}(\sqrt{\exp(2\epsilon_0)\log(1/\delta)/n})$
\\ 
\hline
clone \cite{feldman2022hiding}
&
$O\Big(\frac{\exp(\epsilon_0)-1}{\exp(\epsilon_0)+1}\sqrt{\frac{\exp(\epsilon_0)\log(1/\delta)}{n}}\Big)$ \\
\hline
stronger clone \cite{feldman2023stronger}
& 
$O\Big(\sqrt{\frac{(\exp(\epsilon_0)-1)^2\log(1/\delta)}{n (\exp(\epsilon_0)+1)}}\Big)$
\\
\hline
variation ratio [\textbf{this work}] 
& 
$O\Big(\sqrt{\frac{\beta(\exp(\epsilon_0)-1)\log(1/\delta)}{n}}\Big)$
\\
\hline
\end{tabular}
\end{table}


\textbf{Discussion on differences with  stronger clone reduction. } For $\epsilon_0$-LDP randomizers where $p\equiv q=\exp(\epsilon_0)$, our intermediate result presented in Theorem \ref{the:reduction} yields a divergence upper bound analogous to the one found in the state-of-the-art stronger clone by \cite{feldman2023stronger}. Our framework offers three primary advantages:
(i) Our bound is tighter than \cite{feldman2023stronger} as long as the newly introduced total variation parameter $\beta$ is not the worst-case $(p-1)/(p+1)$. Notably, commonly-used randomizers exhibit much lower $\beta$ values, and computing $\beta$ is straightforward.
(ii) While \cite{feldman2023stronger} handles only LDP randomizers that $q\equiv p$, our reduction is versatile, accommodating also cases where $p<q$ (e.g., for metric local randomizers) and $p>q$ (e.g., for multi-message randomizers).
(iii) Our method entails a $\tilde{O}(n)$-time algorithm for numerically determining the Hockey-stick divergence, offering a much more efficient solution compared to prevailing algorithms \cite{feldman2022hiding,koskela2021tight}, being about $10$ times faster than their documented execution times.

\subsection{Proof Sketch}
Our proof for Theorem \ref{the:dbound} follows three steps: (i) utilizing the $(p,\beta)$-variation and $q$-ratio properties to obtain mixture decomposition of local randomizers; (ii)    
motivated by the clone reduction \cite{feldman2022hiding,feldman2023stronger}, we succinctly reduce the mixture decomposition as several binomial counts, which serve as the dominating pair of distribution for privacy analyses \cite{zhu2022optimal}; (iii) delving into the Hockey-stick divergence of the dominating binomial counts, we utilize the monotonicness of the Hockey-stick divergence to derive the final results of expectation on cumulative probabilities.

We first consider fixed $x_1^0$ and $x_1^1$, and show that $\mech{R}_1(x_1^0), \mech{R}_1(x_1^1),$ and $\mech{R}_i(x_i)$ can be interpreted as mixture distributions (as elaborated in Lemma \ref{lemma:mixture}) with parameters related to $p$, $q$, and the total variation $\beta'=D_{1}(\mech{R}_1(x_1^0)\| \mech{R}_1(x_1^1))$. The core idea is to concentrate on elements where the distributions $\mech{R}_1(x_1^0)$ and $\mech{R}_1(x_1^1)$ differ, and bound the total differing probability using parameters $p$ and $\beta$. For messages from other users, the parameter $q$ is used to compute their likelihood of appearing as a message from user $1$.

\begin{lemma}[Mixture decompositions]\label{lemma:mixture}
Given $x_1^0,x_1^1,...,x_n\in \dom{X}$, 
if algorithms $\{\mech{R}_i\}_{i\in [n]}$ satisfy the $(p, \beta')$-variation property and the $q$-ratio property with some $p> 1, q\geq 1$ and $\beta'=D_{1}(\mech{R}_1(x_1^0)\| \mech{R}_1(x_1^1))$, then there exists distributions $\mathcal{Q}_1^0, \mathcal{Q}_1^1, \mathcal{Q}_1, \mathcal{Q}_2, ..., \mathcal{Q}_n$ such that
\begin{align}
\label{eq:mix1}
&\mech{R}_1(x_1^0) = p \alpha \mathcal{Q}_1^0 + \alpha \mathcal{Q}_1^1+(1-\alpha-p \alpha)\mathcal{Q}_1 \\
\label{eq:mix2}
&\mech{R}_1(x_1^1) = \alpha \mathcal{Q}_1^0 + p \alpha  \mathcal{Q}_1^1+(1-\alpha-p\alpha)\mathcal{Q}_1 \\
\label{eq:mix3}
&\forall i\in [2,n],\  \mech{R}_i(x_i) = r \mathcal{Q}_1^0 + r \mathcal{Q}_1^1+(1-2r)\mathcal{Q}_i
\end{align}
where $\alpha=\frac{\beta'}{p-1}$ and $r=\frac{\alpha p}{q}$.
\end{lemma}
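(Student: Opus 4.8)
The plan is to build the distributions explicitly from the positive and negative parts of $\mech{R}_1(x_1^0)-\mech{R}_1(x_1^1)$ and then check that each of \eqref{eq:mix1}--\eqref{eq:mix3} holds with nonnegative weights. Write $P_0=\mech{R}_1(x_1^0)$ and $P_1=\mech{R}_1(x_1^1)$ for the two densities and record the pointwise facts that drive the argument: the $(p,\beta')$-variation property (applied to both orderings of the pair) gives $P_0\le pP_1$ and $P_1\le pP_0$ everywhere, and $\beta'=D_1(P_0\|P_1)=\int\max\{0,P_0-P_1\}$ is the total-variation mass; the $q$-ratio property gives $P_0\le q\,\mech{R}_i(x_i)$ and $P_1\le q\,\mech{R}_i(x_i)$ everywhere for each $i\ge 2$.

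First I would set $\mathcal{Q}_1^0=\tfrac{1}{\beta'}\max\{0,P_0-P_1\}$ and $\mathcal{Q}_1^1=\tfrac{1}{\beta'}\max\{0,P_1-P_0\}$. Since $\int\max\{0,P_0-P_1\}=\int\max\{0,P_1-P_0\}=\beta'$, these are genuine probability distributions with disjoint supports, and with $\alpha=\beta'/(p-1)$ they obey the key identity $(p-1)\alpha(\mathcal{Q}_1^0-\mathcal{Q}_1^1)=P_0-P_1$.

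Next I would define $\mathcal{Q}_1$ through $(1-\alpha-p\alpha)\mathcal{Q}_1=P_0-p\alpha\,\mathcal{Q}_1^0-\alpha\,\mathcal{Q}_1^1$ and verify four things. Consistency---that the same $\mathcal{Q}_1$ also satisfies \eqref{eq:mix2}---holds because the difference of the two candidate right-hand sides is $P_0-P_1-(p-1)\alpha(\mathcal{Q}_1^0-\mathcal{Q}_1^1)$, which vanishes by the key identity. Normalization is clear since the right-hand side integrates to $1-p\alpha-\alpha$. Nonnegativity is where the variation property is used in full: splitting on the sign of $P_0-P_1$ collapses the residual $P_0-p\alpha\,\mathcal{Q}_1^0-\alpha\,\mathcal{Q}_1^1$ to $(pP_1-P_0)/(p-1)$ where $P_0>P_1$ and to $(pP_0-P_1)/(p-1)$ where $P_1>P_0$, both nonnegative by $P_0\le pP_1$ and $P_1\le pP_0$. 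Finally the weight $1-\alpha-p\alpha$ is nonnegative because $\beta'\le\frac{p-1}{p+1}$ forces $\alpha\le\frac{1}{p+1}$; in the boundary case $1-\alpha-p\alpha=0$ the residual is a nonnegative function of integral zero, hence zero almost everywhere, and $\mathcal{Q}_1$ may be chosen arbitrarily.

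For $i\ge 2$ I would define $\mathcal{Q}_i$ through $(1-2r)\mathcal{Q}_i=\mech{R}_i(x_i)-r\,\mathcal{Q}_1^0-r\,\mathcal{Q}_1^1$, and here lies the main obstacle: the pointwise inequality $r\,\mathcal{Q}_1^0+r\,\mathcal{Q}_1^1\le\mech{R}_i(x_i)$. Since the two supports are disjoint it suffices to bound each term alone. On the support of $\mathcal{Q}_1^0$ the bound $P_0\le pP_1$ yields $p\alpha\,\mathcal{Q}_1^0\le P_0$, so $r\,\mathcal{Q}_1^0=\frac{r}{p\alpha}\,(p\alpha\,\mathcal{Q}_1^0)\le\frac1q P_0\le\mech{R}_i(x_i)$, where the decisive cancellation $\frac{r}{p\alpha}=\frac1q$ is exactly what pins down the choice $r=\frac{\alpha p}{q}$ and the last step is the ratio property; the support of $\mathcal{Q}_1^1$ is symmetric. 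Granting this inequality, $\mathcal{Q}_i\ge 0$ is immediate, and integrating it gives $2r\le 1$ for free, so the weight $1-2r$ is automatically nonnegative, $\mathcal{Q}_i$ normalizes to one, and the boundary $r=\tfrac12$ is handled as before. I expect the only delicate points to be the boundary bookkeeping and checking that $\frac{r}{p\alpha}=\frac1q$ is what aligns the variation and ratio bounds; everything else is routine substitution.
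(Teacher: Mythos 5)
Your construction is exactly the paper's: $\mathcal{Q}_1^0$ and $\mathcal{Q}_1^1$ are the normalized positive and negative parts of $P_0-P_1$ (note $(p-1)\alpha=\beta'$, so your normalization matches theirs), $\mathcal{Q}_1$ and $\mathcal{Q}_i$ are defined as the same residuals, and nonnegativity is verified via the same pointwise bounds $P_0\le pP_1$, $P_1\le pP_0$ and $\max\{P_0,P_1\}\le q\,\mech{R}_i(x_i)$. The proof is correct and follows essentially the same route; your explicit treatment of the boundary cases $1-\alpha-p\alpha=0$ and $r=\tfrac12$ is a small added care the paper omits.
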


Next, we generalize the clone reduction technique \cite{feldman2023stronger} by leveraging the data processing inequality to relate mixture parameters to the level of divergence (as demonstrated in Lemma \ref{lemma:clone}). Specifically, this generalized reduction lemma implies that the statistical distance between shuffled messages $$D(\mech{S}(\mech{R}_1(x_1^0),\ldots,\mech{R}_n(x_n))\ \|\ \mech{S}(\mech{R}_1(x_1^1),\ldots,\mech{R}_n(x_n)))$$ is bounded by the one between binomial counts: $D(P^{q}_{p,\beta'}\| Q^{q}_{p,\beta'})$.

\begin{lemma}[Divergence reduction via mixture]\label{lemma:clone}
Given any $n+1$ inputs $x_1^0, x_1^1, x_2,..., x_n\in \dom{X}$, consider algorithms $\{\mech{R}_i\}_{i\in [n]}$ such that the output domain is finite and
\begin{align*}
& \mech{R}_1(x_1^0)=p \alpha \mathcal{Q}_1^0 + \alpha \mathcal{Q}_1^1+(1-\alpha-p \alpha)\mathcal{Q}_1, \\
& \mech{R}_1(x_1^1)=\alpha \mathcal{Q}_1^0 + p \alpha  \mathcal{Q}_1^1+(1-\alpha-p\alpha)\mathcal{Q}_1, \\
& \forall i\in [2,n],\  \mech{R}_i(x_i)=r \mathcal{Q}_1^0 + r \mathcal{Q}_1^1+(1-2r)\mathcal{Q}_i
\end{align*}
\new{holds for some $p\geq 1, q>1, \alpha=\beta'/(p-1)\in [0,1/(p+1)], r\in [0,1/2]$} and some probability distributions $\mathcal{Q}_1^0,\mathcal{Q}_1^1,\mathcal{Q}_1,\mathcal{Q}_2,...,\mathcal{Q}_n$. Let $C\sim Binom(n-1, 2r)$, $A\sim Binom(C, 1/2)$, and $\Delta_1=Bernoulli(p \alpha)$ and $\Delta_2=Bernoulli(1-\Delta_1, \alpha/(1-p \alpha))$; let $P^{q}_{p,\beta'}=(A+\Delta_1, C-A+\Delta_2)$ and $Q^{q}_{p,\beta'}=(A+\Delta_2, C-A+\Delta_1)$. Then for any distance measure $D$ that satisfies the data processing inequality,
\begin{alignat*}{2}
&D(\mech{S}(\mech{R}_1(x_1^0),..,\mech{R}_n(x_n)) \|\mech{S}(\mech{R}_1(x_1^1),..,\mech{R}_n(x_n)))
\leq\ &D(P^{q}_{p,\beta'}\| Q^{q}_{p,\beta'}).
\end{alignat*}
\end{lemma}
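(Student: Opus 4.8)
The plan is to realize both shuffled outputs as the image of a \emph{single} randomized post-processing channel $K$ applied to $P'$ and to $Q'$, and then invoke the data-processing inequality: if $\mech{S}(\mech{R}_1(x_1^0),\ldots,\mech{R}_n(x_n))=K(P')$ and $\mech{S}(\mech{R}_1(x_1^1),\ldots,\mech{R}_n(x_n))=K(Q')$ for the \emph{same} $K$, then $D(\mech{S}(\cdots)\,\|\,\mech{S}(\cdots))=D(K(P')\,\|\,K(Q'))\le D(P'\,\|\,Q')$. The whole argument rests on reading $P'$ and $Q'$ as the joint law of the pair (number of messages produced from $\mathcal{Q}_1^0$, number produced from $\mathcal{Q}_1^1$) under inputs $x_1^0$ and $x_1^1$, driven by the \emph{same} underlying variables $C,A,\Delta_1,\Delta_2$.

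First I would pass to the ``type'' view of the hypothesized decomposition. Each message is generated in two stages: a type in $\{\mathcal{Q}_1^0,\mathcal{Q}_1^1,\text{blanket}\}$ is drawn, and then a message is sampled from the chosen component. User $i\ge 2$ draws its type with probabilities $(r,r,1-2r)$, while user $1$ draws it with probabilities $(p\alpha,\alpha,1-\alpha-p\alpha)$ under $x_1^0$ and $(\alpha,p\alpha,1-\alpha-p\alpha)$ under $x_1^1$; the constraints $\alpha\in[0,1/(p+1)]$ and $r\in[0,1/2]$ guarantee these are valid distributions. I would then verify that the number of clone users among $\{2,\ldots,n\}$ is $C\sim Binomial(n-1,2r)$, that the number of these assigned to $\mathcal{Q}_1^0$ is $A\sim Binomial(C,1/2)$, and that $(\Delta_1,\Delta_2)$ is precisely an encoding of user $1$'s type, with $(1,0),(0,1),(0,0)$ corresponding to $\mathcal{Q}_1^0,\mathcal{Q}_1^1,$ blanket and the stated Bernoulli parameters reproducing $(p\alpha,\alpha,1-\alpha-p\alpha)$. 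Hence the total type-counts are $(A+\Delta_1,\,C-A+\Delta_2)=P'$ under $x_1^0$ and $(A+\Delta_2,\,C-A+\Delta_1)=Q'$ under $x_1^1$, sharing $C,A,\Delta_1,\Delta_2$.

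Next I would define $K$: on input $(a,b)$ it draws $a$ i.i.d.\ samples from $\mathcal{Q}_1^0$ and $b$ i.i.d.\ samples from $\mathcal{Q}_1^1$, supplies the remaining $n-a-b$ slots with blanket messages, and applies a uniform random permutation to all $n$ messages. The reason a function of the bare counts $(a,b)$ can reproduce the true shuffled output is exchangeability: the shuffler erases the message-to-user association, and within each clone type the messages are i.i.d.\ and independent of which user emitted them, so the multiset generated by ``sample per user, then shuffle'' has the same law as ``sample by type-count, then shuffle.'' Feeding $P'$ and $Q'$ through this common $K$ yields exactly the two shuffled distributions, after which DPI closes the bound. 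I would stress that nothing in this chain uses a specific $D$, so the conclusion holds for every $D$ obeying data processing, as the downstream Hockey-stick and R\'enyi accountings require.

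The step I expect to be the crux is arguing that the $n-a-b$ blanket messages can be regenerated inside $K$ by one rule valid for $P'$ \emph{and} $Q'$ simultaneously. When the blanket components agree ($\mathcal{Q}_i\equiv\mathcal{Q}$) this is immediate: the blanket messages are just $n-a-b$ i.i.d.\ draws from $\mathcal{Q}$, a scenario-independent function of the count alone, and the exchangeability of the previous paragraph goes through verbatim. The delicate case is distinct $\mathcal{Q}_i$: there the blanket messages are user-dependent and correlated with $C$ (hence with $a+b$) and with whether user $1$ is itself a blanket user, so their law cannot literally be read off $(a,b)$. The point to nail down is that the multiset of blanket messages, jointly with the event ``user $1$ is blanket,'' has \emph{identical} law under $x_1^0$ and $x_1^1$ and is suitably independent of how the clones split between $\mathcal{Q}_1^0$ and $\mathcal{Q}_1^1$; only then does the count pair remain a sufficient statistic and $K$ stay scenario-agnostic. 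Establishing this independence (equivalently, that adjoining the blanket multiset as common randomness does not inflate $D(P'\,\|\,Q')$) is where the real work lies, and it is the part I would expect to demand the most care.
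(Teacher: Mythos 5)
Your overall strategy is exactly the paper's: view each message as ``draw a type in $\{\mathcal{Q}_1^0,\mathcal{Q}_1^1,\text{blanket}\}$, then sample from that component,'' check that the type counts are distributed as $P'$ under $x_1^0$ and $Q'$ under $x_1^1$, and exhibit a single scenario-independent channel $K$ from the count pair $(a,b)$ back to the shuffled messages so that the data-processing inequality closes the bound. Your verification of the count laws (including the encoding of user~$1$'s type by $(\Delta_1,\Delta_2)$) is correct, and you are also right that only the $\leq$ direction is needed for the lemma as stated. However, there is a genuine gap at precisely the point you flag and then leave open: you never establish that the blanket-message multiset, conditioned on $(a,b)$, has the same law under $x_1^0$ and under $x_1^1$ when the $\mathcal{Q}_i$ are distinct. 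Without that, your $K$ is not well defined as a single Markov kernel applied to both $P'$ and $Q'$ (it would have to know which users, possibly including user~$1$, contribute blanket draws), and the DPI step cannot be invoked. Saying ``this is where the real work lies'' identifies the crux but does not discharge it; the lemma is essentially this claim.

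The paper closes the gap constructively. It first replaces each message by a \emph{label}: $0$ for type $\mathcal{Q}_1^0$, $1$ for type $\mathcal{Q}_1^1$, label $2$ for user~$1$'s blanket $\mathcal{Q}_1$, and a distinct label $i+1$ for each $\mathcal{Q}_i$, $i\in[2,n]$; one DPI step reduces the problem to the shuffled label sequences (the map ``label $\mapsto$ fresh sample from the corresponding component'' is scenario-independent). It then gives the explicit kernel from $(a,b)$: draw the $n-a-b$ blanket labels uniformly from $[3,n+1]$, and with probability $\frac{(n-a-b)(1-p\alpha-\alpha)(2r)}{(n-a-b)(1-p\alpha-\alpha)(2r)+(a+b)(p\alpha+\alpha)(1-2r)}$ replace one uniformly chosen label by $2$; finally append $a$ zeros and $b$ ones and shuffle. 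The displayed probability is exactly the Bayes posterior $\mathbb{P}[\text{user 1 is blanket}\mid a+b=c]$, obtained from the ratio $\binom{n-1}{c}(2r)^c(1-2r)^{n-1-c}\big/\binom{n-1}{c-1}(2r)^{c-1}(1-2r)^{n-c}=\frac{(n-c)\,2r}{c\,(1-2r)}$, and it is the \emph{same} under both scenarios because user~$1$'s total clone mass $p\alpha+\alpha$ is unchanged when $x_1^0$ is swapped for $x_1^1$; likewise the identity of the blanket users among $[2,n]$ is uniform given their number, and the $0/1$ split of the clones is independent of the blanket identities. These facts together are what make the count pair a sufficient statistic and the kernel scenario-agnostic. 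Supplying this computation (or an equivalent conditional-independence argument) is what your proposal still needs.
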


Then, \new{we utilize the fact that the statistical distance $D(P^{q}_{p,\beta'}\| Q^{q}_{p,\beta'})$ is monotonically non-decreasing with $\beta'$}. For any arbitrary $x_1^0$ and $x_1^1$, the total variation is upper bounded by $\beta$. Thus, we conclude that for all possible neighboring datasets, the statistical distance between the shuffled messages is upper bounded by $D(P^{q}_{p,\beta}\| Q^{q}_{p,\beta})$.

\begin{table*}[ht]
\caption{A summary of variation-ratio parameters of $\epsilon$-LDP randomizers. A lower $\beta$ means a stronger amplification effect.}
\label{tab:parametersldp}
\vspace*{-0.7em}
\centering
\begin{tabular}{|c|c|c|c|}
\hline
\bfseries randomizer & \bfseries parameter $p$ & \bfseries parameter $\beta$ & \bfseries parameter $q$ \\  
\hline
general mechanisms & $e^{\epsilon}$ & $\frac{e^{\epsilon}-1}{e^{\epsilon}+1}$ & $e^{\epsilon}$ \\
\hline

Laplace mechanism for $[0,1]$ \cite{dwork2006calibrating} & $e^{\epsilon}$ & $1-e^{-\epsilon/2}$ & $e^{\epsilon}$ \\

PrivUnit mechanism with cap area $c$ \cite{bonawitz2017practical} & $e^{\epsilon}$ & $\frac{c\cdot (e^{\epsilon}-1)}{c\cdot e^{\epsilon}+1-c}$ & $e^{\epsilon}$ \\


\hline

general randomized response (GRR) on $d$ options \cite{kairouz2016discrete} & $e^{\epsilon}$ & $\frac{e^{\epsilon}-1}{e^{\epsilon}+d-1}$ & $e^{\epsilon}$ \\

binary RR on $d$ options \cite{duchi2013local} & $e^{\epsilon}$ & $\frac{e^{\epsilon/2}-1}{e^{\epsilon/2}+1}$ & $e^{\epsilon}$ \\

$k$-subset on $d$ options \cite{wang2019local,ye2018optimal} & $e^{\epsilon}$ & $\frac{(e^{\epsilon}-1)({d-1\choose k-1}-{d-2\choose k-2})}{e^{\epsilon}{d-1\choose k-1}+{d-1\choose k}}$ & $e^{\epsilon}$ \\

local hash with length $l$ \cite{wang2017locally} & $e^{\epsilon}$ & $\frac{e^{\epsilon}-1}{e^{\epsilon}+l-1}$ & $e^{\epsilon}$ \\

Hadamard response $(K,s,B=1)$ \cite{acharya2018hadamard} & $e^{\epsilon}$ & $\frac{s(e^{\epsilon}-1)/2}{s e^{\epsilon}+K-s}$ & $e^{\epsilon}$ \\

Hadamard response $(K,s,B>1)$ \cite{acharya2018hadamard} & $e^{\epsilon}$ & $\frac{s(e^{\epsilon}-1)}{s e^{\epsilon}+K-s}$ & $e^{\epsilon}$ \\

\hline

sampling RAPPOR on $s$ in $d$ options \cite{qin2016heavy} & $e^{\epsilon}$ & $\frac{s(e^{\epsilon/2}-1)}{d(e^{\epsilon/2}+1)}$ & $e^{\epsilon}$ \\


Wheel on $s$ in $d$ options with length $p$ \cite{wang2020set} & $e^{\epsilon}$ & $\frac{s p(e^{\epsilon}-1)}{ s p e^{\epsilon}+(1-s p)}$ & $e^{\epsilon}$\\
\hline
\end{tabular}

\end{table*}

\begin{table*}
\setlength{\tabcolsep}{0.30em}
\renewcommand{\arraystretch}{1.2}
\caption{A summary of amplification parameters of $\mech{S}(\mech{R}(x_1^0),..., \mech{R}(x_n))$ and $\mech{S}(\mech{R}(x_1^1),..., \mech{R}(x_n)))$ for local $d_\dom{X}$-DP randomizers, $d_{01}=d_\dom{X}(x_1^0,x_1^1)$ and $d_{max}=\max_{x\in \dom{X}} \max\{d_\dom{X}(x_1^0,x), d_\dom{X}(x_1^1,x)\}$.}
\vspace*{-0.7em}
\label{tab:parameterslmdp}
\centering
\begin{tabular}{|c|c|c|c|}
\hline
\bfseries randomizer & \bfseries parameter $p$ & \bfseries parameter $\beta$ & \bfseries parameter $q$ \\  
\hline
general mechanisms & $e^{d_{01}}$ & $\frac{e^{d_{01}}-1}{e^{d_{01}}+1}$ & $e^{d_{max}}$ \\
\hline
Laplace mechanism \cite{alvim2018local}, $\ell_1$-norm on $\mathbb{R}$  & $e^{d_{01}}$ & $1-e^{-d_{01}/2}$ & $e^{d_{max}}$ \\

planar Laplace \cite{andres2013geo}, $\ell_2$-norm on $\mathbb{R}^2$  & $e^{d_{01}}$ & $2\int_{0}^{\frac{d_{01}}{2}}\int_{-\infty}^{+\infty}\frac{e^{-\sqrt{(x-\frac{d_{01}}{2})^2+y^2}}}{2\pi}\mathrm{d} y \mathrm{d}x$ & $e^{d_{max}}$ \\



\hline
\end{tabular}
\end{table*}

\begin{lemma}[Non-decreasing of divergence]\label{lemma:nondecrease}
For any $p> 1, q\geq 1$ and $\beta,\beta'\in [0, \frac{p-1}{p+1}]$, if $\beta>\beta'$, then for any distance measure $D$ that satisfies data processing inequality,
$$D(P^{q}_{p,\beta}\| Q^{q}_{p,\beta})\geq D(P^{q}_{p,\beta'}\| Q^{q}_{p,\beta'}).$$
\end{lemma}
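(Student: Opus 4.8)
The plan is to prove the monotonicity by exhibiting a single post-processing channel that sends the $\beta$-distributions to the $\beta'$-distributions simultaneously, and then invoking the data processing inequality. Concretely, I would look for a Markov kernel $\Phi$ on the observation space $\mathbb{N}^2$ of count pairs such that $\Phi(P^{q}_{\beta,p})=P^{q}_{\beta',p}$ and $\Phi(Q^{q}_{\beta,p})=Q^{q}_{\beta',p}$ using the \emph{same} $\Phi$; once such a $\Phi$ is found, the DPI-property of $D$ immediately yields $D(P^{q}_{\beta,p}\|Q^{q}_{\beta,p})\ge D(P^{q}_{\beta',p}\|Q^{q}_{\beta',p})$.

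To construct $\Phi$, I would first rewrite each of $P^{q}_{\beta,p}$ and $Q^{q}_{\beta,p}$ as a sum of $n$ independent two-dimensional contributions, mirroring the clone construction of Lemma \ref{lemma:clone}: one \emph{signal} contribution $(\Delta_1,\Delta_2)$ taking values $(1,0),(0,1),(0,0)$ with probabilities $p\alpha,\alpha,1-(p{+}1)\alpha$ (swapped for $Q^{q}_{\beta,p}$), and $n-1$ \emph{blanket} contributions each taking values $(1,0),(0,1),(0,0)$ with probabilities $r,r,1-2r$. Here $\alpha=\beta/(p-1)$ and $r=\alpha p/q$, so every nonzero per-trial probability is exactly linear in $\alpha$, hence in $\beta$. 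The crucial observation is that scaling $\beta$ down to $\beta'<\beta$ scales $\alpha$ and $r$ by the common factor $t=\beta'/\beta=\alpha'/\alpha\in(0,1)$, and therefore multiplies every nonzero outcome probability of every contribution (signal and blanket alike) by exactly $t$, while leaving the $0/1$ split symmetric.

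This suggests taking $\Phi$ to be coordinatewise \emph{binomial thinning} with retention probability $t=\beta'/\beta$: on input $(a,b)$, draw $a'\sim\mathrm{Bin}(a,t)$ and $b'\sim\mathrm{Bin}(b,t)$ independently and output $(a',b')$. I would then verify that this single channel does the job for both distributions at once. The key is the standard \emph{thinning-of-a-sum} identity: since each contribution emits at most one token (a single $0$, a single $1$, or nothing), the observed pair $(a,b)$ is exactly the pool of surviving tokens, and thinning each contribution's token independently with probability $t$ produces the same law as thinning the pooled counts $a$ and $b$ independently. Consequently $\Phi$ multiplies each nonzero per-trial probability by $t$ regardless of whether a token came from the signal or a blanket term, which by the linearity noted above reproduces precisely the $\beta'$-parameters; thus $\Phi(P^{q}_{\beta,p})=P^{q}_{\beta',p}$ and $\Phi(Q^{q}_{\beta,p})=Q^{q}_{\beta',p}$.

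I expect the main obstacle to be this last equivalence — arguing that the per-contribution thinning, which is what cleanly rescales the parameters, can be realized as a fixed post-processing of the observed counts $(a,b)$ that is oblivious to each token's provenance (signal versus blanket). Making this precise amounts to the commutation of binomial thinning with independent summation, together with checking that the constraints $\beta,\beta'\in[0,\tfrac{p-1}{p+1}]$ (and the standing $r,r'\in[0,\tfrac12]$) keep all contribution probabilities valid so that both endpoint distributions are well defined. Once that equivalence is in hand, the data processing inequality closes the argument with no further computation.
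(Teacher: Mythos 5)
Your proposal is correct and follows essentially the same route as the paper: the paper's Appendix~\ref{app:nondecrease} defines exactly the coordinatewise binomial-thinning kernel $g(d,e)=(\operatorname{Binomial}(d,\beta'/\beta),\operatorname{Binomial}(e,\beta'/\beta))$, verifies via the same per-contribution (signal plus blanket) decomposition that it maps $P^{q}_{\beta,p},Q^{q}_{\beta,p}$ to $P^{q}_{\beta',p},Q^{q}_{\beta',p}$, and concludes by the data processing inequality. The thinning-commutes-with-independent-summation step you flag as the main obstacle is indeed the only nontrivial verification, and it goes through exactly as you describe.
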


Use the fact that $\beta'$ is upper bounded by $\beta$, the results of previous steps can be summarized in in Theorem \ref{the:reduction}. \new{It is applicable to any divergence measures that satisfy the data-processing inequality, such as R\'enyi divergences \cite{mironov2017renyi}}.

\begin{theorem}[Variation-ratio reduction]\label{the:reduction}
For $p> 1, \beta\in [0, \frac{p-1}{p+1}],  q\geq 1$, let $C\sim Binom(n-1, \frac{2\beta p}{(p-1)q})$, $A\sim Binom(C, 1/2)$ and $\Delta_1\sim Bernoulli(\frac{\beta p}{p-1})$ and $\Delta_2\sim Bernoulli(1-\Delta_1, \frac{\beta}{p-1-\beta p})$; let $P^{q}_{p,\beta}$ denote $(A+\Delta_1, C-A+\Delta_2)$ and $Q^{q}_{p,\beta}$ denote $(A+\Delta_2, C-A+\Delta_1)$.
If randomizers $\{\mech{R}_i\}_{i\in [n]}$ satisfy the $(p, \beta)$-variation property and the $q$-ratio property, then for any $x_1^0,x_1^1,x_2,...,x_n\in \dom{X}$ and any measurement $D$ satisfying the data-processing inequality:
\small
\begin{alignat*}{2}
&D(\mech{S}(\mech{R}_1(x_1^0),\ldots,\mech{R}_n(x_n))\|\mech{S}(\mech{R}_1(x_1^1),\ldots,\mech{R}_n(x_n))) 
\leq &D(P^{q}_{p,\beta}\|Q^{q}_{p,\beta}).
\end{alignat*}
\normalsize
\end{theorem}

\begin{theorem}[Divergence bound as an expectation]\label{the:hsdnumerical}
For $p> 1, \beta\in [0, \frac{p-1}{p+1}],  q\geq 1$, let $\alpha=\beta/(p-1)$ and $r=\alpha p/q$, then for any $\epsilon \in \mathbb{R}$, the following two equations hold:
\small\begin{alignat*}{2}
D_{e^\epsilon}(P_{p,\beta}^{q} \| Q_{p,\beta}^{q}) =&\mathop{\mathbb{E}}\limits_{c\sim Binom(n-1,2r)}
\Big[(p-e^{\epsilon})\alpha\cdot\mathop{\mathsf{CDF}}\limits_{c,1/2}[\lceil low_{c+1}-1\rceil, c] \\
&+(1-p e^\epsilon)\alpha\cdot\mathop{\mathsf{CDF}}\limits_{c,1/2}[\lceil low_{c+1}\rceil, c] \\
&+(1-e^\epsilon)(1-\alpha-p\alpha)\cdot \mathop{\mathsf{CDF}}\limits_{c,1/2}[\lceil low_c\rceil, c]\Big],
\end{alignat*}\normalsize
where $low_c=\frac{(e^{\epsilon'}p-1)\alpha c+(e^{\epsilon'}-1)(1-\alpha-\alpha p)(n-c)\cdot \frac{r}{1-2r}}{\alpha(e^{\epsilon'}+1)(p-1)}$.
\end{theorem}

Finally, we show an efficient way to  computing (the upper bound of) $D_{e^{\epsilon'}}(P^{q}_{p,\beta}\| Q^{q}_{p,\beta})$. A straight-forward approach is enumerating the output space $(a,b)\in [0,n]^2$ using the definition of the divergence. However, this method has a high computational cost of $O(n^2)$ and may suffer from numerical underflow issues. To avoid these issues, we exploit the monotonicity of the probability ratio $\frac{\mathbb{P}[P^{q}_{p,\beta}=(a,b)]}{\mathbb{P}[Q^{q}_{p,\beta}=(a,b)]}$ with respect to $a$ when $a+b$ is fixed. Thus, the maximum operation in the Hockey-stick divergence can be safely removed by carefully tracking the range of $a$, and the overall divergence can be expressed as an expectation over $c\sim Binom(n-1,2r)$ that follows binomial distribution (see Theorem \ref{the:hsdnumerical}). To track the range of $a$, we introduce the notation $\mathsf{CDF}_{c,1/2}[c_1, c_2]$ to represent the cumulative probability $\sum_{i\in [c_1,c_2]}{c\choose i}/2^{c}$, which can be computed using two calls to the regularized incomplete beta function \cite{paris2010incomplete}. As a result, the computational complexity is reduced to $\tilde{O}(n)$.

\subsection{Amplification Parameters of Randomizers}\label{subsec:params}
To demonstrate the broad applicability of our proposed framework for privacy amplification analysis in the shuffle model, we provide amplification parameters for a wide range of local randomizers, including prevalent $\epsilon_0$-LDP randomizers, local metric $d_\dom{X}$-DP randomizers, and multi-message randomizers.

\textbf{LDP randomizers. } We summarize the variation-ratio parameters of $\epsilon$-LDP randomizers in Table \ref{tab:parametersldp}. The worst-case total variation bound $\beta = \frac{e^\epsilon - 1}{e^\epsilon + 1}$ for general mechanisms is derived from \cite{kairouz2014extremal}, which proves that randomized response maximizes it. This worst-case bound is equivalent to the asymptotically optimal stronger clone reduction in research~\cite{feldman2023stronger}. For commonly used mechanisms, such as mean estimation mechanisms \cite{dwork2006calibrating, duchi2013local,nguyen2016collecting, wang2019collecting}, distribution estimation mechanisms \cite{kairouz2016discrete, duchi2013local,erlingsson2014rappor, wang2019local,ye2018optimal, wang2017locally,acharya2018hadamard}, and mechanisms for complicated data \cite{qin2016heavy,wang2018privset,gu2019pckv,wang2021hiding}, we exploit their specific structures to derive tighter total variation bounds, which can lead to stronger amplification effects (see Lemma \ref{lemma:nondecrease} for qualitative analyses and Section \ref{sec:exp} for numerical results). 


\begin{table*}
\caption{A summary of amplification parameters of some multi-message shuffle private protocols.}
\vspace*{-0.7em}
\label{tab:parametersmulti}
\centering
\begin{tabular}{|c|c|c|c|}
\hline
\bfseries randomizer & \bfseries parameter $p$ & \bfseries parameter $\beta$ & \bfseries parameter $q$ \\  
\hline

Balcer \textit{et al.} \cite{balcer2020separating} with coin prob. $p$ for binary summation & $+\infty$ & $1$ & $\max\{\frac{1}{p},\frac{1}{1-p}\}$ \\

Balcer \textit{et al.} \cite{balcer2021connecting} with uniform coin for binary summation & $+\infty$ & $1$ & $2$ \\

\hline
Cheu \textit{et al.} \cite{cheu2022differentially} with flip prob. $f\in [0,0.5]$ on $\{0,1\}^d$ & $\frac{(1-f)^2}{f^2}$ & $1-2f$ & $\frac{1-f}{f}$ \\

Balls-into-bins \cite{luo2022frequency} with $d$ bins (and $s$ special bins) & $+\infty$ & $1$ & $\frac{d}{s}$ \\

mixDUMP \cite{li2023privacy} with flip prob. $f\in [0,\frac{d-1}{d}]$ on $d$ bins & $\frac{(1-f)(d-1)}{f}$ & $\frac{(1-f)(d-1)-f}{d-1}$ & $(1-f)d$ \\

\hline
\end{tabular}
\end{table*}

\textbf{Local metric DP radomizers. } Regarding local metric DP mechanisms, we study the indistinguishability level between shuffled messages generated by a metric locally $d_\dom{X}$-private mechanism $\mech{R}$, where $D_{\exp(d_\dom{X}(x,x'))}(\mech{R}(x)\|\mech{R}(x'))=0$ holds for all $x,x'\in \dom{X}$. Specifically, we aim to analyze the indistinguishable level between $\mech{S}(\mech{R}(x_1^0),..., \mech{R}(x_n))$ and $\mech{S}(\mech{R}(x_1^1),..., \mech{R}(x_n)))$, which is also captured by Theorem \ref{the:reduction} with $p\leq \exp(d_{01})$ and $q\leq \exp(d_{max})$. Here, $d_{01}$ denotes the local indistinguishable level $d_\dom{X}(x_1^0, x_1^1)$, and $d_{max}$ denotes the maximum indistinguishable level
$$\max_{x\in \dom{X}} \max\{d_\dom{X}(x, x_1^0), d_\dom{X}(x, x_1^1)\}$$
w.r.t. $x_1^0$ and $x_1^1$. We summarize the variation-ratio parameters for $\mech{S}(\mech{R}(x_1^0),..., \mech{R}(x_n))$ and $\mech{S}(\mech{R}(x_1^1),..., \mech{R}(x_n)))$ in Table \ref{tab:parameterslmdp}. For general local $d_{\dom{X}}$-DP mechanisms, the amplification parameters $p=e^{d_{01}}, \beta=\frac{e^{d_{01}}-1}{e^{d_{01}}+1}$ come directly from $d_{01}$-LDP properties of $\mech{R}$ on $x_1\in \{x_1^0,x_1^1\}$, the parameter $q=e^{d_{max}}$ comes from the metric privacy constraints that $\mech{R}(x_i)$ and $\mech{R}(x_1)$ are $(d_{\dom{X}}(x_i,x_1))$-indistinguishable and $d_{\dom{X}}(x_i,x_1)\leq d_{max}$. Compared with the amplification upper bound for general mechanisms proved by \cite{wang2023SMDP}, where the clone probability $2r=2/(\max_{x\in \dom{X}} e^{d_\dom{X}(x, x_1^0)}+ e^{d_\dom{X}(x, x_1^1)})$, our clone probability $2\frac{\beta p}{(p-1)q}=2/(e^{d_{max}}+e^{d_{max}-d_{01}})$ is at least not smaller than theirs (due to the triangle inequality property of $d_\dom{X}$). For Laplace mechanism with $\ell_1$-norm metric privacy on one-dimensional numerical values, the total variation bound $\beta$ is: $$D_1(Laplace(0,1)\| Laplace(d_{01},1))=1-e^{-d_{01}/2}.$$ For the location randomization mechanism with $\ell_2$-norm metric privacy over two-dimensional domains, such as the planar Laplace mechanism \cite{andres2013geo}, the probability density $\mathbb{P}[PlanarLaplace(u,1)=x]=\frac{e^{-\|x-u\|_2}}{2\pi}$ for $u,x\in \mathbb{R}^2$, and the total variation bound $\beta$ is: 
\begin{alignat*}{2}
\small
2\int_{0}^{{d_{01}}/{2}}\int_{-\infty}^{+\infty}(e^{-\sqrt{(x-{d_{01}}/{2})^2+y^2}})/(2\pi)\mathrm{d} y \mathrm{d}x.
\normalsize
\end{alignat*}


\textbf{Multi-message protocols. } For multi-message protocols in the literature, we revisit their actual variation-ratio parameters in Table \ref{tab:parametersmulti}. Typically, in these protocols, messages sent by a user can be classified as input-dependent or input-independent, with only one of them being input-dependent. For example, for the user $1$ among $n'$ users in \cite{cheu2022differentially}, $\mech{R}_1(x_1)$ is produced by binary randomized response on $x_i$, while other $m-1$ messages (i.e. $\mech{R}_{n'+1}(*),\ldots, \mech{R}_{(m-1)\cdot n'+1}(*)$) are independently produced by binary randomized response on a zero vector $\{0\}^d$ with a flip probability of $f \in [0,0.5]$. Similarly, in protocols for binary summation ($x_i\in \{0,1\}$), the input-independent/blanket message can be a Bernoulli variable with biased coin \cite{balcer2020separating} or uniform coin \cite{balcer2021connecting}. In balls-into-bins \cite{luo2022frequency}, pureDUMP, and mixDUMP \cite{li2023privacy}, each blanket message is a uniform-random category in $[d]$. The parameters $p$ and $\beta$ are computed from the input-dependent message $\mech{R}_1(x_1)$, while the parameter $q$ is derived from input-independent messages contributed by both user $1$ and other users, referred to as blanket messages in \cite{balle2019privacy}. It is important to emphasize that, within these multi-message shuffle protocols, input-independent blanket messages can be viewed as if they were generated by dummy users. Consequently, the number of messages in which an input-dependent message can blend is equivalent to the total count of input-independent messages. For instance, assuming $n'$ users with each user contributing $m-1$ input-independent messages, the term $n-1$ in Theorem \ref{the:reduction} effectively becomes $n'\cdot(m-1)$. Compared to the privacy guarantees in the original works, our amplification bound for these protocols offers a more than $70\%$ budget savings, as demonstrated by the numerical results in Section \ref{subsec:multimessage}.


\textbf{Discussion on the  generality of our framework for multi-message protocols. } \new{We note that our framework applies to all single-message shuffle protocols using local $\epsilon$-DP or metric DP randomizers. While for
multi-message protocols in the shuffle model, based on the correlation between outputted messages, they can be primarily categorized into three main types}:
\begin{itemize}
    \item[I.] The first type of multi-message shuffle protocols is achieved by invoking multiple single-message protocols. This is seen as in the Recursive protocol in \cite{balle2020private} and the utility-complexity balanced protocol in \cite{girgis2023distributed}.
    \item[II.] The second type involves sending one input-dependent message and multiple independent blanket messages, as seen in \cite{balcer2020separating,balcer2021connecting,cheu2022differentially,luo2022frequency,li2023privacy}.
    \item[III.] \new{The third type involves sending multiple \emph{correlated} messages per user}. Protocols falling in this category include \cite{ghazi2020private,ghazi2021differentially} and the secret-sharing-style IKOS protocol in \cite{balle2020private}.
\end{itemize}
We also envision that there can be multi-message protocols that combines Types I and II, by invoking several protocols each is a multi-message one. Our variation-ratio framework is compatible with protocols categorized under Type I and/or Type II. However, Type III protocols, wherein messages from a single user are correlated, breach the independence assumption of each randomizer $\mech{R}_i$ within our framework.

\textbf{Discussion on variation-ratio parameters.} 
The research on parameters \(p\), \(\beta\), and \(q\) shows patterns that help design protocols for the shuffle model.
Single-message shuffle protocols that utilize $\epsilon_0$-LDP randomizers are limited to $q=p=e^{\epsilon_0}$. Therefore, the clone probability $2r=2\beta p/((p-1)q)$ decreases significantly with $\epsilon_0$, and larger population are required to achieve global $(\epsilon,\delta)$-DP. Randomization mechanisms with better utility in the local model often fully exploit privacy constraints and have larger total variation bounds $\beta$, resulting in weaker privacy amplification effects (e.g., comparing randomized response with binary RR on $2$ options). Multi-message shuffle protocols typically set $q<p$ or even $q\ll p$ to increase the clone probability and reduce the number of messages transmitted per user. For local metric $d_{\dom{X}}$-DP randomizers that handle large data domains, $d_{max}$ is often set relatively high, and thus $q\geq p$. When a global privacy target is given, they need to strike a balance between local data utility (which increases with $d_{max}$) and privacy amplification effects (which decrease with $d_{max}$).


\subsection{Numerical Method for Upper Bounds}\label{subsec:numericalupper}
This part is dedicated to numerically computing the indistinguishable level between two shuffled message sets $\mech{S}(\mech{R}_1(x_1^0),\ldots,\mech{R}_n(x_n))$ and $\mech{S}(\mech{R}_1(x_1^1),\ldots,\mech{R}_n(x_n))$, within a given privacy failure probability $\delta \in [0,1]$. According to Theorem \ref{the:reduction}, the indistinguishable level is upper bounded by the one between binomial counts $P^{q}_{p,\beta}$ and $Q^{q}_{p,\beta}$. Our aim is then to find the smallest $\epsilon\in [0,\log p]$ that the corresponding Hockey-stick divergence is no more than the privacy failure parameter:
\begin{align}\label{pb:epsilon}
\arg\min_{\epsilon\in [0,\log p]} \max\Big[D_{e^\epsilon}(P^{q}_{p,\beta} \| Q^{q}_{p,\beta}), D_{e^\epsilon}(Q^{q}_{p,\beta} \| P^{q}_{p,\beta})\Big]\leq \delta.
\end{align}
Solving this problem directly is infeasible. Fortunately, the divergence monotonically non-increases with $\epsilon$, so one can binary search on $\epsilon$. Plugging in the results from Theorems \ref{the:dbound} and \ref{the:hsdnumerical}, we compute the Hockey-stick divergence given the temporary level  $\epsilon$ as an expectation of cumulative probabilities. The complete implementation for solving problem (\ref{pb:epsilon}) is provided Algorithm \ref{alg:upperbound}.




\begin{algorithm}[t]
\small
    \caption{Efficient search of  indistinguishable upper bound of $P^{q}_{p,\beta}$ and $Q^{q}_{p,\beta}$}
    \label{alg:upperbound}
    \KwIn{parameter $\delta\in [0,1]$, number of users $n$, parameters $p>1, q\geq 1, \beta\in [0, \frac{p-1}{p+1}]$, number of iterations $T$.}
    \KwOut{An upper bound of $\epsilon'_c$ that  $\max\big[D_{\epsilon'_c}(P^{q}_{p,\beta}\| Q^{q}_{p,\beta}), D_{\epsilon'_c}(Q^{q}_{p,\beta}\| P^{q}_{p,\beta})\big]\leq \delta$ holds.}

    {$\alpha=\frac{\beta}{p-1}$,\ \ $r=\frac{\alpha p}{q}$}

    \SetKwProg{myproc}{Procedure}{}{}
    \SetKwFunction{proc}{{Delta}}
    \myproc{\proc{$\epsilon'$}}{
        {$\delta' \leftarrow 0$}

        {$w_c=\binom{n-1}{c}(2r)^c(1-2r)^{n-1-c}$}

        {$low_c = \frac{(e^{\epsilon'}p-1)\alpha c+(e^{\epsilon'}-1)f}{\alpha(e^{\epsilon'}+1)(p-1)}$}

        
        \For{$c \in [0,n]$}{
            
            {$\delta' \leftarrow \delta'+w_c((p-e^{\epsilon})\alpha\cdot \mathop{\mathsf{CDF}}\limits_{c,1/2}[\lceil low_{c+1}-1\rceil, c]+(1-p e^\epsilon)\alpha\cdot\mathop{\mathsf{CDF}}\limits_{c,1/2}[\lceil low_{c+1}\rceil, c]+(1-e^\epsilon)(1-\alpha-p\alpha)\cdot \mathop{\mathsf{CDF}}\limits_{c,1/2}[\lceil low_c\rceil, c])$}
        }
        \KwRet{$\delta'$}
    }
    
    $\epsilon_{L}\leftarrow 0$,\ \ \ \  $\epsilon_{H}\leftarrow \log(p)$
    
    \For{$t \in [T]$}{
    {$\epsilon_t\leftarrow \frac{\epsilon_{L}+\epsilon_{H}}{2}$}
    
        \eIf{$\proc(\epsilon_t)> \delta$}
        {$\epsilon_{L}\leftarrow \epsilon_t$}
        {$\epsilon_{H}\leftarrow \epsilon_t$}
    }
    
    \KwRet{$\epsilon_H$}
    
\end{algorithm}

\new{Due to symmetry, the divergence $D_{e^{\epsilon}}(P^{q}_{p,\beta}\| Q^{q}_{p,\beta})$ is always equal to $D_{e^{\epsilon}}(Q^{q}_{p,\beta}\| P^{q}_{p,\beta})$}. Thus, Algorithm \ref{alg:upperbound} only needs to compute one of them. We set the binary search upper bound, $\epsilon_H$, to $\log(p)$, given that the differing data item $x_1$ is at least protected by $\mech{R}_1$ with an indistinguishability level of $\log(p)$. The algorithm includes a sub-procedure, $\mathsf{Delta(\epsilon)}$, computing $D_{e^{\epsilon}}(P^{q}_{p,\beta}\| Q^{q}_{p,\beta})$ in $\tilde{O}(n)$ time. Hence, the total computational complexity is $\tilde{O}(n\cdot T)$, with $T$ being the number of binary search iterations, influencing the precision of the numerical indistinguishability level.

\section{Amplification Lower Bounds}\label{sec:lowerbounds}
In this section, we aim to establish lower bounds for privacy amplification via shuffling. These bounds serve to demonstrate the tightness of the upper bounds derived in the preceding sections. \new{Aiming at maximizing the divergence in observable shuffled messages given two neighboring datasets}, our approach involves selecting elements $y\in \dom{Y}$ in the message space where the probabilities $\mathbb{P}[\mech{R}(x_1^0)=y]$ and $\mathbb{P}[\mech{R}(x_1^1)=y]$ differ. \new{We then choose the worst-case data $x^*\in \dom{X}$ for other users to maximize the \emph{expected} probability ratio over differed elements}, and finally, we observe the number of occurrences of these differing elements in shuffled messages. \new{For succinctness, we summarize the occurrences of these elements and summarize them into two binomial counts}, similar to the technique used in Theorem \ref{the:reduction}. The resulting amplification lower bound is presented in Theorem \ref{the:lowerbound}, indicating the (worst-case) divergence of shuffled messages is lower bounded by the divergence between two binomial counts $P_{p_0,\beta}^{q_0,q_1}$ and $Q_{p_0,\beta}^{q_0,q_1}$.

\begin{theorem}[Privacy amplification lower bounds]\label{the:lowerbound}
Given $x_1^0, x_1^1\in \dom{X}$ and local randomizers $\{\mech{R}_1, \mech{R}_2\}$ that have finite output domain $\dom{Y}$, let $p_0$ denote $\frac{\sum_{y\in \dom{Y}}\llbracket \mathbb{P}[\mech{R}_1(x_1^1)=y]>\mathbb{P}[\mech{R}_1(x_1^0)=y]\rrbracket\cdot \mathbb{P}[\mech{R}_1(x_1^1)=y]}{\sum_{y\in \dom{Y}}\llbracket \mathbb{P}[\mech{R}_1(x_1^1)=y]>\mathbb{P}[\mech{R}_1(x_1^0)=y]\rrbracket\cdot \mathbb{P}[\mech{R}_1(x_1^0)=y]}$, let $\beta$ denote $D_{\epsilon^0}(\mech{R}_1(x_1^1)\| \mech{R}_1(x_1^0))$. Find an $x^*\in \dom{X}$ such that: 
\small
\begin{align*}
x^* &= \arg \max\limits_{x\in \mathcal{X}} \min \\
\Bigg\{&\frac{\sum_{y\in \mathcal{Y}}\left[\mathbb{P}[\mech{R}_1(x_1^1)=y]>\mathbb{P}[\mech{R}_1(x_1^0)=y]\right]\cdot \mathbb{P}[\mech{R}_1(x_1^1)=y]}{\sum_{y\in \mathcal{Y}}\left[\mathbb{P}[\mech{R}_1(x_1^1)=y]>\mathbb{P}[\mech{R}_1(x_1^0)=y]\right]\cdot \mathbb{P}[\mech{R}_2(x)=y]}, \\
&\frac{\sum_{y\in \mathcal{Y}}\left[\mathbb{P}[\mech{R}_1(x_1^1)=y]<\mathbb{P}[\mech{R}_1(x_1^0)=y]\right]\cdot \mathbb{P}[\mech{R}_1(x_1^0)=y]}{\sum_{y\in \mathcal{Y}}\left[\mathbb{P}[\mech{R}_1(x_1^1)=y]<\mathbb{P}[\mech{R}_1(x_1^0)=y]\right]\cdot \mathbb{P}[\mech{R}_2(x)=y]}\Bigg\},
\end{align*}
\normalsize
let $q_1$ denote $\frac{\sum_{y\in \dom{Y}}\llbracket \mathbb{P}[\mech{R}_1(x_1^1)=y]>\mathbb{P}[\mech{R}_1(x_1^0)=y]\rrbracket\cdot \mathbb{P}[\mech{R}_1(x_1^1)=y]}{\sum_{y\in \dom{Y}}\llbracket \mathbb{P}[\mech{R}_1(x_1^1)=y]>\mathbb{P}[\mech{R}_1(x_1^0)=y]\rrbracket\cdot \mathbb{P}[\mech{R}_2(x^*)=y]}$ and $q_0$ denote $\frac{\sum_{y\in \dom{Y}}\llbracket \mathbb{P}[\mech{R}_1(x_1^1)=y]<\mathbb{P}[\mech{R}_1(x_1^0)=y]\rrbracket\cdot \mathbb{P}[\mech{R}_1(x_1^0)=y]}{\sum_{y\in \dom{Y}}\llbracket \mathbb{P}[\mech{R}_1(x_1^1)=y]<\mathbb{P}[\mech{R}_1(x_1^0)=y]\rrbracket\cdot \mathbb{P}[\mech{R}_2(x^*)=y]}$. Let $\alpha=\frac{\beta'}{(p-1)}$, $r_0=\frac{\alpha p_0}{q_0}$ and $r_1=\frac{\alpha p_0}{q_1}$, and $C\sim Binom(n-1, r_0+r_1)$, $A\sim Binom(C, r_0/(r_0+r_1))$, and $\Delta_1=Bernoulli(p_0 \alpha)$, $\Delta_2=Bernoulli(1-\Delta_1, \alpha/(1-p_0 \alpha))$. Let $P_{p_0,\beta}^{q_0,q_1}$ denote $(A+\Delta_1, C-A+\Delta_2)$, $Q_{p_0,\beta}^{q_0,q_1}$ denote $(A+\Delta_2, C-A+\Delta_1)$, then there exists $x_2,...,x_n\in \dom{X}$ such that:
\small\begin{alignat*}{2}
&D(\mech{S}(\mech{R}_1(x_1^0),...,\mech{R}_2(x_n))\|\mech{S}(\mech{R}_1(x_1^1),...,\mech{R}_2(x_n)))
\geq&D(P_{p_0,\beta}^{q_0,q_1} \| Q_{p_0,\beta}^{q_0,q_1}).
\end{alignat*}\normalsize
\end{theorem}

According to the definition of binomial counts $P_{p_0,\beta}^{q_0,q_1}, Q_{p_0,\beta}^{q_0,q_1}$ in the lower bounding Theorem \ref{the:lowerbound} and $P_{p,\beta}^{q}, Q_{p,\beta}^{q}$ in the upper bounding Theorem \ref{the:reduction}, when $p_0=p$ and $q_0=q_1=q$, the upper bound matches lower bound and is thus exactly tight. This requires that the expected probability ratio $p_0$ 
equals to the maximum ratio $p$. This can be met by randomizers that employ an extremal probability design  \cite{kairouz2014extremal} that probability ratio ${\mathbb{P}[\mech{R}(x_i)=y]}/{\mathbb{P}[\mech{R}(x_{i'})=y]}$ must belongs to $\{1, e^{\epsilon}, e^{-\epsilon}\}$ for all $i,i'\in [n]$, $x_i,x_{i'}\in \dom{X}$, and $y\in \dom{Y}$. Besides, \new{the matching condition requires that the expected probability ratios $q_0$ and $q_1$ equal to the maximum ratio $q$}. 



Most state-of-the-art $\epsilon$-LDP randomizers adhere to the mentioned criteria. Notable examples include the generalized randomized response \cite{kairouz2016discrete} for more than $2$ options, the $k$-subset mechanism \cite{wang2019local} with cardinality $k\leq 2$, local hash \cite{wang2017locally} with length $l\geq 3$, Hardamard response \cite{acharya2018hadamard}, PrivUnit \cite{bonawitz2017practical} with cap area $c\leq 1/2$, PCKV-GRR \cite{gu2019pckv}, and Wheel mechanism \cite{wang2020set} with length $p\geq 1/(2s)$. As a result, the amplification upper bounds for these randomizers match the lower bounds and are precisely tight. In a similar vein, the privacy amplification upper bounds for several recent multi-message protocols are also exactly tight. Examples include Cheu \textit{et al.} \cite{cheu2022differentially} with $d\geq 3$, Balls-into-bins \cite{luo2022frequency} with $d\geq 3s$, pureDUMP and mixDUMP \cite{li2023privacy} with $d\geq 3$ (listed in Table \ref{tab:parametersmulti}). \new{Similar to the upper bound, there exists $\tilde{O}(n)$-complexity algorithms to compute the numerical value of the lower bound.} For other randomizers that fail to meet the aforementioned criteria, such as the Laplace and $k$-subset mechanism \cite{wang2019local} with cardinality $k\geq 3$, the expected ratios $p_0$ and $q_0, q_1$ are close to the maximum ratio $p$ and $q$ respectively. Consequently, the upper bound remains roughly tight.

\section{Parallel Composition in the Shuffle Model}\label{sec:parallel}
Most data analysis tasks involve multiple estimation targets. In the local setting of DP, it is common practice to partition the entire user population into multiple non-overlapping subsets and allow each subset to handle one estimation query with full budget $\epsilon_0$. This technique is used to achieve better utility compared to dividing the privacy budget $\epsilon_0$. Examples of tasks that utilize this approach include heavy hitter estimation \cite{qin2016heavy, bassily2017practical, wang2019locally}, multi-dimensional data publication \cite{ren2018lopub, wang2019collecting, wang2019answering}, frequent itemset mining \cite{wang2018locally}, range queries \cite{cormode2019answering}, marginal queries \cite{cormode2018marginal}, data synthesis \cite{zhang2018calm, wangT2019locally}, and machine learning \cite{jagannathan2009practical,xin2019differentially,fletcher2019decision}. This technique aligns with the parallel composition theorem of DP in the centralized setting \cite{dwork2006differential}.

In the shuffle model of DP, a naive approach to handling multiple queries is to divide the population into $K$ cohorts and separately amplify privacy \new{with about $n/K$ users} for each query. \new{A more effective approach is to allow each user to randomly select one query from all $K$ queries using a common distribution $P_k\in \Delta_K$, and to contribute to the selected query in parallel}. We illustrate this parallel approach in Algorithm \ref{alg:localparallel}. Since all base mechanisms $\mech{M}_k$ (for $k\in [K]$) satisfy $\epsilon_0$-LDP, the overall algorithm satisfies $\epsilon_0$-LDP.

\begin{algorithm}[htbp]
\SetKwInOut{Parameter}{Params}
    \caption{Parallel local randomizer}
    \label{alg:localparallel}
    \Parameter{Number of queries $K$, a probability distribution $P_{k}:[K]\mapsto [0,1]$, local base randomizers $\{\mech{M}_k: \dom{X}\mapsto \dom{Y}_k\}_{k\in [K]}$ each satisfies $\epsilon_0$-LDP and corresponds to one query.}
    \KwIn{An input $x\in \dom{X}$.}
    \KwOut{An output $z$ satisfies $\epsilon_0$-LDP.}
    
    {sample $k \sim P_{k}$}
    
    {$y \leftarrow \mech{M}_k(x)$}
    
    \KwRet{$y$}
\end{algorithm}

In light of the fact that each user draws a sample $k$ from the identical query distribution $P_k$, all users adopt the same randomization algorithm (Algorithm \ref{alg:localparallel}) that satisfies $\epsilon_0$-LDP. As a result, the privacy amplification via shuffling discussed in the previous sections remains valid with $n$ users amplify privacy together. We refer to Algorithm \ref{alg:localparallel} as $\mech{R}$. Drawing on the $(e^{\epsilon_0}, \frac{e^{\epsilon_0}-1}{e^{\epsilon_0}+1})$-variation property and the $e^{\epsilon_0}$-ratio property of arbitrary identical $\epsilon_0$-LDP randomizers, we can straightforwardly conclude that:
\small
\begin{alignat*}{2}
&D(\mech{S}(\mech{R}(x_1^0), ..., \mech{R}(x_n))\ \|\ \mech{S}(\mech{R}(x_1^1), ..., \mech{R}(x_n)))\\
\leq & D\Big(P^{e^{\epsilon_0}}_{e^{\epsilon_0},\frac{e^{\epsilon_0}-1}{e^{\epsilon_0}+1}}\| Q^{e^{\epsilon_0}}_{e^{\epsilon_0},\frac{e^{\epsilon_0}-1}{e^{\epsilon_0}+1}}\Big),
\end{alignat*}
\normalsize
which we refer to as the basic parallel composition theorem.

By exploiting the connection between total variation bounds and indistinguishable levels established in this study, we can derive a stronger privacy guarantee for Algorithm \ref{alg:localparallel}. Specifically, the total variation bound of the parallel local randomizer is bounded by the expectation of the base randomizers' total variation bound. We present this improved result in Theorem \ref{the:strongparallel} and refer to it as the advanced parallel composition in the shuffle model.

\begin{theorem}[Advanced parallel composition in the shuffle model]\label{the:strongparallel}
Assume $\mech{M}_k$ satisfy the $(e^{\epsilon_0}, \beta_k)$-variation property, and let $\mech{R}$ denote the Algorithm \ref{alg:localparallel}, then for any inputs $x_1^0, x_1^1, x_2, ..., x_n\in \dom{X}$:
\begin{alignat*}{2}
&D(\mech{S}(\mech{R}(x_1^0), ..., \mech{R}(x_n))\|\mech{S}(\mech{R}(x_1^1), ..., \mech{R}(x_n)))
\leq\ & D\big(P^{e^{\epsilon_0}}_{e^{\epsilon_0},\bar{\beta}}\| Q^{e^{\epsilon_0}}_{ e^{\epsilon_0},\bar{\beta}}\big)
\end{alignat*}
where $\bar{\beta}=\sum_{k'\in [K]} \beta_{k'}\cdot \mathbb{P}[P_k=k']$.
\end{theorem}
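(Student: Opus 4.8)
The plan is to show that the parallel randomizer $\mech{R}$ of Algorithm~\ref{alg:localparallel} satisfies the $(e^{\epsilon_0},\bar\beta)$-variation property together with the $e^{\epsilon_0}$-ratio property, and then to invoke the variation-ratio reduction of Theorem~\ref{the:reduction} with parameters $p=e^{\epsilon_0}$, $\beta=\bar\beta$, and $q=e^{\epsilon_0}$. The ratio requirements are inherited directly from the base mechanisms. Writing $\mech{R}(x)$ as the mixture $\sum_{k\in[K]}\mathbb{P}[P_k=k]\,\mech{M}_k(x)$ over the sampled query index, the mediant inequality gives, for every output $y$ and every pair $x,x'$,
\[
\frac{\mathbb{P}[\mech{R}(x)=y]}{\mathbb{P}[\mech{R}(x')=y]}
= \frac{\sum_{k}\mathbb{P}[P_k=k]\,\mathbb{P}[\mech{M}_k(x)=y]}{\sum_{k}\mathbb{P}[P_k=k]\,\mathbb{P}[\mech{M}_k(x')=y]}
\le \max_{k}\frac{\mathbb{P}[\mech{M}_k(x)=y]}{\mathbb{P}[\mech{M}_k(x')=y]} \le e^{\epsilon_0},
\]
since each $\mech{M}_k$ is $\epsilon_0$-LDP. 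Hence $\mech{R}$ is itself $e^{\epsilon_0}$-LDP, which supplies both $D_{e^{\epsilon_0}}(\mech{R}(x_1^0)\,\|\,\mech{R}(x_1^1))=0$ (the ratio half of the variation property) and the $e^{\epsilon_0}$-ratio property, because all $n$ users run the identical randomizer $\mech{R}$.

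The one genuinely new ingredient is the tightened total-variation bound. Fixing arbitrary $x_1^0,x_1^1\in\dom{X}$ and again expanding both outputs as mixtures over the query index, I would apply the subadditivity of $y\mapsto\max\{0,y\}$ (equivalently, the joint convexity of $D_1$) to obtain
\[
D_1\big(\mech{R}(x_1^0)\,\|\,\mech{R}(x_1^1)\big)
\le \sum_{k\in[K]}\mathbb{P}[P_k=k]\,D_1\big(\mech{M}_k(x_1^0)\,\|\,\mech{M}_k(x_1^1)\big)
\le \sum_{k\in[K]}\mathbb{P}[P_k=k]\,\beta_k = \bar\beta,
\]
where each per-query total variation is bounded by $\beta_k$ via the $(e^{\epsilon_0},\beta_k)$-variation property of $\mech{M}_k$. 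Since this bound is uniform over the input pair $x_1^0,x_1^1$, the randomizer $\mech{R}$ satisfies the $(e^{\epsilon_0},\bar\beta)$-variation property.

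With the $(e^{\epsilon_0},\bar\beta)$-variation property and the $e^{\epsilon_0}$-ratio property in hand, Theorem~\ref{the:reduction} immediately yields $D(\mech{S}(\mech{R}(x_1^0),\dots,\mech{R}(x_n))\,\|\,\mech{S}(\mech{R}(x_1^1),\dots,\mech{R}(x_n)))\le D(P^{e^{\epsilon_0}}_{\bar\beta,e^{\epsilon_0}}\,\|\,Q^{e^{\epsilon_0}}_{\bar\beta,e^{\epsilon_0}})$, as claimed. The only nontrivial step is the convexity argument for the total variation; everything else is a bookkeeping reduction to the already-established $\epsilon_0$-LDP of $\mech{R}$ and a direct appeal to the variation-ratio reduction. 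I expect the main subtlety to lie in confirming that the mixture total-variation bound is valid with respect to the worst-case input pair, which holds precisely because each $\beta_k$ is itself a worst-case pairwise bound for $\mech{M}_k$.
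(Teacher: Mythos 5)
Your proposal is correct and follows essentially the same route as the paper: establish that the composite randomizer $\mech{R}$ satisfies the $(e^{\epsilon_0},\bar\beta)$-variation property (the ratio half from the $\epsilon_0$-LDP of $\mech{R}$, inherited from the identical $\epsilon_0$-LDP base mechanisms, and the total-variation half from the convexity of $D_1$ over the mixture on the query index) and then invoke Theorem~\ref{the:reduction}. Your write-up merely makes explicit two steps the paper leaves implicit, namely the mediant-inequality argument for the $\epsilon_0$-LDP of the mixture and the $e^{\epsilon_0}$-ratio property; no substantive difference.
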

\begin{proof}
To establish the theorem, it is sufficient to demonstrate that $\mech{R}$ satisfies the $(e^{\epsilon_0}, \bar{\beta})$-variation property. The fact that 
$\mech{R}$ satisfies $D_{e^{\epsilon_0}}(\mech{R}(x_1^0)\|\mech{R}(x_1^1))=0$
is a direct consequence of the $\epsilon_0$-LDP guarantee of $\mech{R}$. Moreover, applying the definition of the total variation (or Hockey-stick divergence), we obtain:
\small
\begin{alignat*}{2}
D_{1}(\mech{R}(x_1^0)\| \mech{R}(x_1^1))&\leq \sum\nolimits_{k'\in [K]}\mathbb{P}[P_k=k']\cdot D_{1}(\mech{M}_{k'}(x_1^0)\| \mech{M}_{k'}(x_1^1))\\
&\leq \sum\nolimits_{k'\in [K]} \mathbb{P}[P_k=k']\cdot \beta_{k'}.
\end{alignat*}
\normalsize
\end{proof}

\vspace*{-1em}
\section{Numerical Results}\label{sec:exp}
In this section, we present evaluations of the proposed variation-ratio framework for both single-message and multi-message protocols in the shuffle model. Our main focus is to demonstrate the effectiveness and efficiency of numerical upper bounds. We also validate the effectiveness of the closed-form bounds presented in Theorems \ref{the:tightlevel} and \ref{the:level}. Additionally, we demonstrate the performance improvements achieved through the advanced parallel composition presented in Theorem \ref{the:strongparallel}, taking private range queries in the shuffle model as a concrete example.


\subsection{On Single-message Protocols}
To evaluate the effectiveness of the proposed variation-ratio framework for privacy amplification on LDP randomizers, we compare it with existing amplification bounds such as the privacy blanket \cite{balle2019privacy}, clone reduction \cite{feldman2022hiding}, and the stronger clone reduction \cite{feldman2023stronger}. We specifically consider two state-of-the-art LDP randomizers for discrete distribution estimation: the subset selection mechanism \cite{wang2019local,ye2018optimal} and the optimal local hash \cite{wang2017locally}. For the privacy blanket method, we present the tighter bound between the upper bounds given by ``Hoeffding, Generic'' and ``Bennett, Generic'' for general LDP randomizers, denoted as $privacy$-$blanket, general$. We also present the tighter bound between the upper bounds given by ``Hoeffding'' and ``Bennett'' with randomizer-specific parameters, i.e., with total variation similarity $\gamma={{d\choose k}}/({e^{\epsilon_0} {d-1\choose k-1}+{d-1\choose k}})$ for the subset mechanism and total variation similarity $\gamma=\frac{l}{e^{\epsilon_0} +l-1}$ for the optimal local hash, denoted as $privacy$-$blanket, specific$. All presented results are numerical amplification upper bounds (except the classical $\mathrm{EFMRTT19}$ \cite{erlingsson2019amplification} providing only closed-form bounds).

We use \emph{amplification ratio} to measure the effectiveness of privacy amplification. It is defined as the ratio of the local budget $\epsilon_0$ to the amplified privacy budget $\epsilon$:
\[\text{amplification ratio}={\epsilon_0}/{\epsilon}.\]
We present the amplification ratio results for the subset and optimal local hash mechanisms in Figures \ref{fig:subsetn46} and \ref{fig:olhn46}, respectively. The results demonstrate that our variation-ratio analyses can save about $30\%$ of the privacy budget for both mechanisms when compared with the best existing bounds. Notably, in the optimal local hash mechanism, when $l=e^{\epsilon_0}+1$ is greater than $2$, the amplification upper bounds obtained from our variation-ratio framework are tight, matching the lower bounds presented in Section \ref{sec:lowerbounds}.

\begin{figure*}[ht]
\begin{center}
\centerline{\includegraphics[width=135mm]{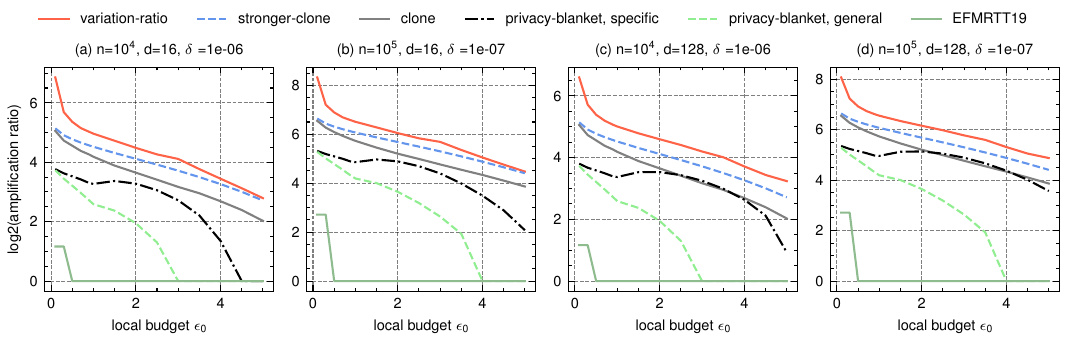}}
\vskip -0.18in
\caption{Numerical comparison of amplification effects (base $2$ logarithm of amplification ratio) of subset selection mechanism with $n=10^4$ or $10^5$, domain size $d=16$ or $128$, and varying local budget $\epsilon_0\in [0.1, 5.0]$.}
\label{fig:subsetn46}
\end{center}
\vspace*{-1.1em}
\end{figure*}

\begin{figure*}[ht]
\begin{center}
\centerline{\includegraphics[width=135mm]{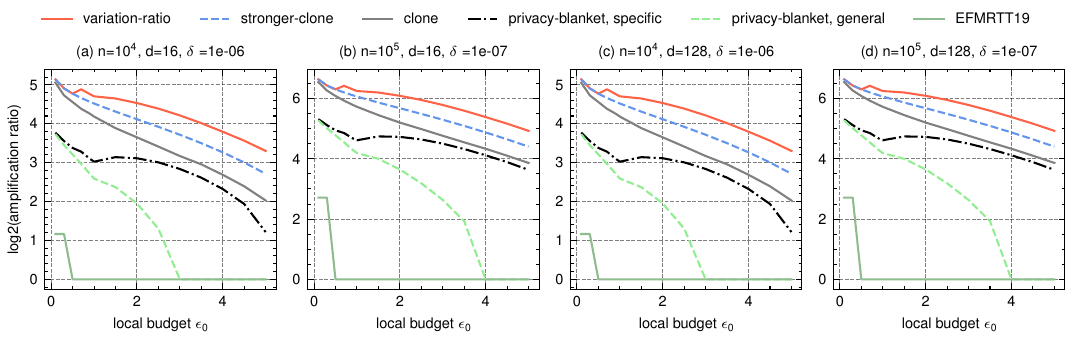}}
\vskip -0.18in
\caption{Numerical comparison of amplification effects (base $2$ logarithm of amplification ratio) of optimal local hash mechanism with $n=10^4$ or $10^5$, domain size $d=16$ or $128$, and local budget $\epsilon_0\in [0.1, 5.0]$.}
\label{fig:olhn46}
\end{center}
\vspace*{-1.0em}
\end{figure*}

\subsection{On Multi-message Protocols}\label{subsec:multimessage}

\begin{figure*}
\begin{center}
\centerline{\includegraphics[width=137mm]{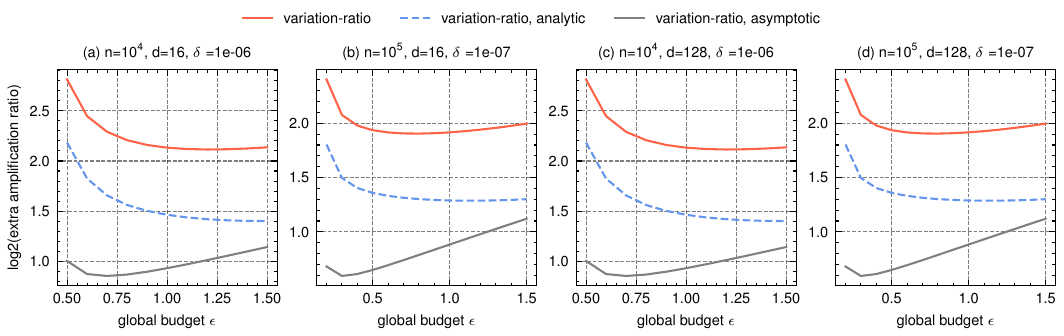}}
\vskip -0.18in
\caption{Numerical comparison of amplification effects (base $2$ logarithm of extra amplification ratio) of the Cheu \textit{et al.} \cite{cheu2022differentially} multi-message protocol with $n=10^4$ or $10^5$, domain size $d=16$ or $128$, and varying global budget $\epsilon'\in [0.01, 1.5]$.}
\label{fig:cheun46}
\end{center}
\vspace*{-0.9em}
\end{figure*}

\begin{figure*}
\begin{center}
\centerline{\includegraphics[width=135mm]{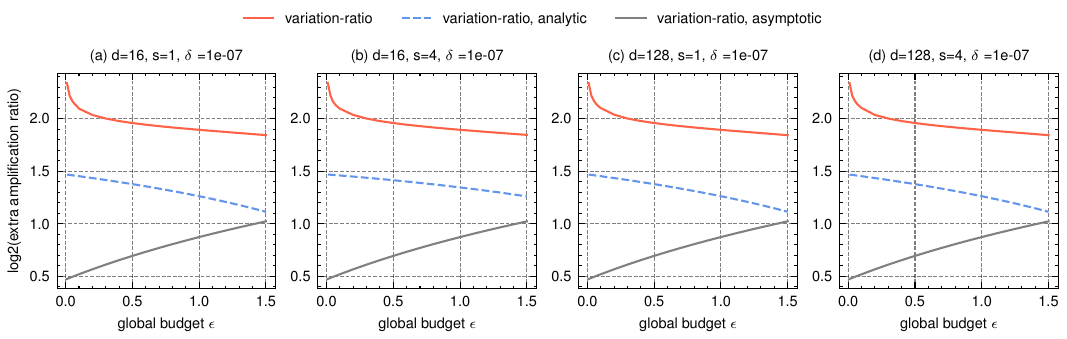}}
\vskip -0.18in
\caption{Numerical comparison of amplification effects (base $2$ logarithm of extra amplification ratio) of the balls-into-bins  multi-message protocol with $n=\frac{32\log(2/\delta)d}{\epsilon'^2 s}$ \cite{luo2022frequency} and varying global budget $\epsilon'\in [0.01, 1.5]$.}
\label{fig:b2bn46}
\end{center}
\vspace*{-1.3em}
\end{figure*}

To assess the efficacy of variation-ratio analyses for multi-message protocols, we apply it to two state-of-the-art histogram estimation protocols, namely Cheu \textit{et al.} \cite{cheu2022differentially} and balls-into-bins \cite{luo2022frequency}. We compare the amplified privacy provided by the original works, denoted as $\epsilon'$, with that given by variation-ratio analyses, denoted as $\epsilon$. To measure the additional privacy amplification offered by our analyses, we use the \textit{extra amplification ratio}, defined as:
\[\text{extra amplification ratio}={\epsilon'}/{\epsilon}.\]
As depicted in Figures \ref{fig:cheun46} and \ref{fig:b2bn46}, our numerical bounds, denoted as $variation$-$ratio$, are significantly tighter and result in a privacy budget savings of approximately $75\%$. Moreover, we present closed-form bounds from Theorem \ref{the:tightlevel} and Theorem \ref{the:level}, denoted as $variation$-$ratio, analytic$ and $variation$-$ratio, asymptotic$, respectively. Both bounds are tighter than those obtained from the original works, and the closed-form bounds from Theorem \ref{the:tightlevel} yield more than $50\%$ budget savings.

\subsection{On Advanced Parallel Composition}
This part evaluates the privacy amplification effects provided by the advanced parallel composition in Theorem \ref{the:strongparallel}. To illustrate the effectiveness of this technique, we consider range queries over a categorical domain $[1:d]$, a well-studied problem in the literature \cite{cormode2019answering,erlingsson2019amplification}. To avoid $\Theta(d)$ errors in estimators, a common practice is to represent categories in a hierarchical form and let each user report one hierarchy level. For categorical domains of size $d=2^{H}$, the $k$-th value in the $h$-th hierarchy is given by:
$$V_{h,k}=\llbracket j\in [(k-1)\cdot 2^h : k\cdot 2^h] \rrbracket,$$
where $h\in [0:H-1]$, $k\in [1 : d/2^h]$, and $\llbracket \ \rrbracket$ is the Iverson bracket.

\begin{figure*}
\begin{center}
\centerline{\includegraphics[width=135mm]{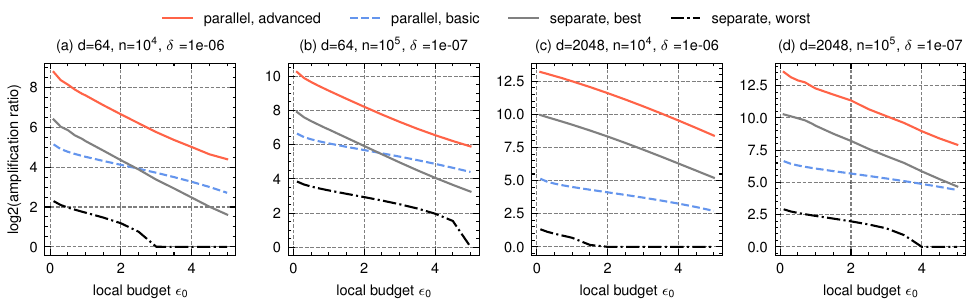}}
\vskip -0.18in
\caption{Numerical comparison of amplification effects (base $2$ logarithm of amplification ratio) of separated approach, basic parallel composition, and advanced parallel composition.}
\label{fig:paralleln45}
\end{center}
\vspace*{-1.0em}
\end{figure*}

Follow the approach suggested in \cite{cormode2019answering}, we assume each user uniformly selects one hierarchy level $h\in [0:H-1]$ and uses the generalized randomized response mechanism with full budget $\epsilon_0$ to report the one-hot vector $V_{h,*}$ (the generalized randomized response is optimal in the low local privacy regime \cite{ye2018optimal}). Using the basic parallel composition theorem, this parallel local randomizer with $H$ queries implies the following variation-ratio parameters: $p=e^{\epsilon_0}$, $\beta=\frac{e^{\epsilon_0}-1}{e^{\epsilon_0}+1}$, and $q=e^{\epsilon_0}$. In contrast, according to Table \ref{tab:parametersldp} and the advanced parallel composition in Theorem \ref{the:strongparallel}, the tighter variation-ratio parameters can be computed as follows: $p=e^{\epsilon_0}$, $q=e^{\epsilon_0}$, and $\beta=\sum_{h\in [0:H-1]}\frac{1}{H}\frac{e^{\epsilon_0}-1}{e^{\epsilon_0}+d/2^h-1}$.

\begin{table*}
\small
\caption{Amplification bounds (w.r.t. $\delta=0.01/n$) and running time comparison for general $\epsilon_0$-LDP randomizers with iterations $T=20$ or $T=10$ and varying number of users.}
\vspace*{-0.7em}
\label{tab:T20time}
\centering
\begin{tabular}{|c|c||c|c||c|c|}
\hline
     \multicolumn{2}{|c||}{} & \multicolumn{2}{|c||}{$T=20$}  & \multicolumn{2}{|c|}{$T=10$} \\
    \cline{1-6} 
    & \textbf{number of users $n$} & \textbf{amplified privacy } $\epsilon$ & \textbf{time (seconds)} & \textbf{amplified privacy } $\epsilon$ & \textbf{time (seconds)}  \\ \hline
\multirow{3}{*}{$\epsilon_0=1.0$} & $10^4$                    &   $0.0433$               &    $0.45$                 & $0.0440$ & $2.3$ \\ \cline{2-6}
    & $10^6$                & $0.00503$                      & $16.7$
    & $0.00586$ & $4.7$ \\ \cline{2-6}
    & $10^8$               & $0.000566$                      & $106.2$ 
    & $0.000977$ & $0.43$ \\
    \hline
\multirow{3}{*}{$\epsilon_0=3.0$} & $10^4$                   & $0.227$                 &   $3.8$                    & $0.229$ & $1.5$ \\ \cline{2-6} 
    & $10^6$                 &   $0.0255$               &  $13.9$                     
    & $0.0264$ & $2.8$ \\ \cline{2-6}
    & $10^8$                 &  $0.00283$                &  $\mathbf{117.5}$                     
    & $0.00293$ & $5.2$ \\ \hline
\multirow{3}{*}{$\epsilon_0=5.0$} & $10^4$                   & $0.743$                 &   $2.4$                    & $0.743$ & $0.9$\\ \cline{2-6} 
    & $10^6$                 &   $0.0778$               &  $9.3$                     
    & $0.0782$ & $3.1$ \\ \cline{2-6} 
    & $10^8$                 &  $0.00853$                &  $44.6$                     
    & $0.00977$ & $\mathbf{7.9}$\\ \hline
\multirow{3}{*}{$\epsilon_0=7.0$} & $10^4$                   & $6.99$                 &   $0.24$                    & $6.99$ & $0.31$\\ \cline{2-6} 
    & $10^6$                 &   $0.224$               &  $2.8$                     
    & $0.225$ & $1.2$ \\ \cline{2-6} 
    & $10^8$                 &  $0.0242$                &  $17.9$                     
    & $0.0273$ & $5.1$\\ \hline
\end{tabular}
\vspace*{0.0em}
\end{table*}

We present a comparative analysis of the numerical privacy amplification achieved through advanced parallel composition and basic parallel composition in Figure \ref{fig:paralleln45}, for the following settings: $d=64$ or $2048$, and $n=10^4$ or $10^5$. Our findings demonstrate that the use of advanced parallel composition results in a reduction of the privacy budget by approximately $75\%$. Moreover, we investigate the privacy amplification effects of the separate approach wherein non-overlapping users report each hierarchy level separately. We present the amplification results for $n/H$ users using the best-possible parameters: $p=e^{\epsilon_0}$, $q=e^{\epsilon_0}$, and $\beta=\frac{e^{\epsilon_0}-1}{e^{\epsilon_0}+d-1}$ (denoted as $separate, best$ in Figure \ref{fig:paralleln45}) or the worst-possible parameters: $p=e^{\epsilon_0}$, $q=e^{\epsilon_0}$, and $\beta=\frac{e^{\epsilon_0}-1}{e^{\epsilon_0}+1}$ (denoted as $separate, worst$). Our results show that the advanced parallel composition saves budget by $80\%$-$95\%$ when compared to the separated approach.

\balance

\subsection{Efficiency Evaluation}
\new{We implement Theorem \ref{the:dbound} and Algorithm \ref{alg:upperbound} in Python 3.8 and execute it on a desktop computer with Intel Core i7-10700KF @3.8GHz and $32$GB memory}. In Table \ref{tab:T20time}, we present the results for $T=20$ or $T=10$ ($T$ is the number of binary search iterations on amplified level $\epsilon$, see Section \ref{subsec:numericalupper}), where we vary the local budget $\epsilon_0$ and number of users $n$. The numerical privacy amplification bounds presented in the table correspond to general LDP mechanisms (refer to Table \ref{tab:parametersldp} for variation-ratio parameters). Our results demonstrate that the running time of the implementation is less sensitive to the local budget $\epsilon_0$, and mainly depends on the population size $n$ and number of iterations $T$. Furthermore, the running time grows linearly with $n$, and we can obtain tight numerical privacy amplification bounds within a dozen seconds even when $n$ is extremely large (e.g., $n=10^8$). Our results also show that one can trade tightness for computational efficiency by choosing a smaller value for $T$. Specifically, comparing the results obtained for $T=10$ with those for $T=20$, we observe that the latter provides slightly tighter bounds, but at the cost of much longer running time.

\section{Conclusion}\label{sec:conclusion}
This work has presented a unified, tight, easy-to-use, and efficient framework for analyzing privacy amplification within the emerging shuffle model. The framework has been shown to provide tight bounds for a wide range of state-of-the-art single-message/multi-message protocols, by translating two novel yet intuitive parameterizations (namely, total variation and probability ratio) about local randomizers into differential privacy levels. Additionally, the framework has induced an advanced parallel composition theorem in the shuffle model, widely applicable to count queries, data synthesis, data mining, and machine learning. Our framework also leads to a fast $\tilde{O}(n)$-complexity procedure for numerical privacy amplification analyses. Comprehensive experiments affirm the effectiveness and efficiency of the proposed framework.
   



\begin{acks}
\justifying{Shaowei Wang and Jin Li are also with the Guangdong Provincial Key Laboratory of Blockchain Security, China. Yun Peng is the corresponding author. This work is supported by National Key Project of China (No. 2020YFB1005700), National Natural Science Foundation of China (No.62372120, 62102108, U23A20307, U21A20463, 62172385, 62172117, U1936218, 61802383), Natural Science Foundation of Guangdong Province (No.2022A1515010061,  No.2023A1515030273), and Guangzhou Basic and Applied Basic Research Foundation (No.202201010194,  No.202201020139).}
\end{acks}

\clearpage

\balance
\bibliographystyle{ACM-Reference-Format}
\bibliography{refs}

\newpage
\onecolumn
\appendix
\setcounter{page}{1}

\section{Proof of Lemma \ref{lemma:mixture}}
Let $R_1^0$, $R_1^1$, and $R_i$ denote the probability distributions of the random variable $\mech{R}_1(x_1^0)$, $\mech{R}_1(x_1^1)$, and $\mech{R}_i(x_i)$, respectively. We establish the existence of mixture distributions through constructive proof. Specifically, for any $y\in \dom{Y}$ in the output domain, we define the probability distributions of $\mathcal{Q}_1^0$, $\mathcal{Q}_1^1$, $\mathcal{Q}_1$, $\mathcal{Q}_2$, $\ldots$, and $\mathcal{Q}_n$ as follows:
\begin{align*}
&& \mathcal{Q}_1^0[y] &=\left\{
    \begin{array}{@{}lr@{}}
        \frac{R_1^0[y]-R_1^1[y]}{(p-1)\alpha}, & \text{if } R_1^0[y]> R_1^1[y];\\
        0, & \text{else,}
    \end{array}
    \right. & \\
&&\mathcal{Q}_1^1[y] &=\left\{
    \begin{array}{@{}lr@{}}
        \frac{R_1^1[y]-R_1^0[y]}{(p-1)\alpha}, & \text{if } R_1^0[y]< R_1^1[y];\\
        0, & \text{else,}
    \end{array}
    \right. & \\ 
&&\mathcal{Q}_1[y] &= 
        \frac{\min\{R_1^1[y],R_1^0[y]\}}{1-\alpha-p\alpha}-\frac{ |R_1^0[y]-R_1^1[y]|}{(p-1)(1-\alpha-p\alpha)}, \\
&&\mathcal{Q}_i[y] &=
        \frac{R_i[y]-r(\mathcal{Q}_1^0[y]+\mathcal{Q}_1^1[y])}{1-2r}. &
\end{align*}
We first demonstrate the validity of probability distributions $\mathcal{Q}_1^0$ and $\mathcal{Q}_1^1$. Specifically, for $\mathcal{Q}_1^0$, it can be observed that $\mathcal{Q}_1^0[y]$ is non-negative for all $y\in\dom{Y}$, and $\sum_{y\in \dom{Y}} \mathcal{Q}_1^0[y] = \frac{D_{1}(\mech{R}_1(x_1^0)| \mech{R}_1(x_1^1))}{(p-1)\alpha}=\frac{\beta'}{\beta'}=1$, thereby verifying it as a valid probability distribution. Likewise, we establish $\mathcal{Q}_1^1$ as a valid distribution.

Furthermore, we prove that $\mathcal{Q}_1$ is a valid distribution, and we demonstrate that Equations \ref{eq:mix1} and \ref{eq:mix2} hold. Applying the $(p, \beta')$-variation property, we obtain $\max\{R_1^0[y], R_1^1[y]\}\leq p \cdot\min\{R_1^0[y], R_1^1[y]\}$, thus verifying that $\mathcal{Q}_1[y]$ is non-negative. Additionally, since $p \alpha+\alpha+(1-\alpha-p \alpha)=1$, it follows that $$\sum_{y\in \dom{Y}} \mathcal{Q}_1[y]= \frac{\sum_{y\in \dom{Y}}R_1^0[y]-p \alpha \mathcal{Q}_1^0[y] - \alpha \mathcal{Q}_1^1[y]}{(1-\alpha-p \alpha)}=1,$$ indicating that $\mathcal{Q}_1$ is indeed a valid distribution.

Next, we demonstrate that Equation \ref{eq:mix1} holds. When $R_1^0[y]\geq R_1^1[y]$, we have $p \alpha \mathcal{Q}_1^0[y] + \alpha \mathcal{Q}_1^1[y]+(1-\alpha-p \alpha)\mathcal{Q}_1[y]=(R_1^0[y]-R_1^1[y])p/(p-1)+R_1^1[y]-(R_1^0[y]-R_1^1[y])/(p-1)=R_1^0[y]$. When $R_1^0[y]< R_1^1[y]$, we have $p \alpha \mathcal{Q}_1^0[y] + \alpha \mathcal{Q}_1^1[y]+(1-\alpha-p \alpha)\mathcal{Q}_1[y]=(R_1^1[y]-R_1^0[y])/(p-1)+R_1^0[y]-(R_1^1[y]-R_1^0[y])/(p-1)=R_1^0[y]$. Combining these two cases, we obtain Equation \ref{eq:mix1}. Similarly, we can show that Equation \ref{eq:mix2} holds by symmetry.

Finally, we demonstrate the validity of $\mathcal{Q}_i$ and the satisfaction of Equation \ref{eq:mix3}. By invoking the $q$-ratio property of $R_i$, we obtain $R_i[y]\geq \max\{R_1^0[y], R_1^1[y]\}/q$. Consequently, we obtain $R_i[y]\geq \max\{p\alpha\mathcal{Q}_1^0[y], p\alpha\mathcal{Q}_1^1[y]\}/q$. Notably, we can infer that $\mathcal{Q}_1^0[y]$ and $\mathcal{Q}_1^1[y]$ are never simultaneously greater than 0. As such, we conclude that $R_i[y]\geq p\alpha(\mathcal{Q}_1^0[y]+\mathcal{Q}_1^1[y])/q\geq r(\mathcal{Q}_1^0[y]+\mathcal{Q}_1^1[y])$, thereby leading to $\mathcal{Q}_i[y]$ being non-negative. Furthermore, we utilize the fact that $r+r+(1-2r)=1$, which enables us to establish that $\mathcal{Q}_i$ is a valid distribution. For the right-hand side of Equation \ref{eq:mix3}, if $\mathcal{Q}_1^0[y]\geq \mathcal{Q}_1^1[y]$, we obtain $r\mathcal{Q}_1^0[y] + r \mathcal{Q}_1^1[y]+(1-2r)\mathcal{Q}_i[y]=r\mathcal{Q}_1^0[y]+R_i[y]-r\mathcal{Q}_1^0[y]=R_i[y]$. Alternatively, if $\mathcal{Q}_1^0[y]< \mathcal{Q}_1^1[y]$, we obtain $r\mathcal{Q}_1^0[y] + r \mathcal{Q}_1^1[y]+(1-2r)\mathcal{Q}_i[y]=r\mathcal{Q}_1^1[y]+R_i[y]-r\mathcal{Q}_1^1[y]=R_i[y]$. Combining these two conditional results establishes the satisfaction of Equation \ref{eq:mix3}.

Let $c=a+b$, the condition \small$$\frac{\mathbb{P}[P^{q}_{p,\beta}=(a,b)]}{\mathbb{P}[Q^{q}_{p,\beta}=(a,b)]}=\frac{p\alpha {a}+\alpha{b}+(1-\alpha-\alpha p)(n-a-b)\cdot \frac{r}{1-2r}}{\alpha {a}+p \alpha {b}+(1-\alpha-\alpha p)(n-a-b)\cdot \frac{r}{1-2r}}> e^{\epsilon'}$$\normalsize holds \emph{if and only if} when $a> low_c=\frac{(e^{\epsilon'}p-1)\alpha c+(e^{\epsilon'}-1)f}{\alpha(e^{\epsilon'}+1)(p-1)}$. Let $P=(A, C-A)$, $P_0=(A+1, C-A)$ and $P_1=(A, C-A+1)$, then the Hockey-stick divergence becomes:
\begin{alignat*}{2}
&D_{e^\epsilon}(P_{p,\beta}^{q} \| Q_{p,\beta}^{q})\\
=&\sum_{a,b\in [0,n]^2}\max\big\{0, \mathbb{P}[P_{\beta,p}^{q}=(a,b)]-e^\epsilon\mathbb{P}[Q_{\beta,p}^{q}=(a,b)]\big\}\\
=&\sum_{c\in [0,n]}\sum_{a\in [\lceil low_c \rceil,c]}\mathbb{P}[P_{p,\beta}^{q}=(a,c-a)]-e^\epsilon\mathbb{P}[Q_{p,\beta}^{q}=(a,c-a)]\\
=&\sum_{c\in [0,n]}\sum_{a\in [\lceil low_c \rceil,c]} (p-e^{\epsilon})\alpha\mathbb{P}[P_{0}=(a,c-a)] \\
&+\sum_{c\in [0,n]}\sum_{a\in [\lceil low_c \rceil,c]} (1-p e^{\epsilon})\alpha\mathbb{P}[P_{1}=(a,c-a)] \\
&+\sum_{c\in [0,n]}\sum_{a\in [\lceil low_c \rceil,c]} (1-\alpha-\alpha p)(1-e^{\epsilon})\mathbb{P}[P=(a,c-a)].
\end{alignat*}
Pluging into the probability formulas of $\mathbb{P}[P_{0}=(a,c-a)], \mathbb{P}[P_{1}=(a,c-a)]$ and $\mathbb{P}[P=(a,c-a)]$, we arrive at the conclusion. 

\section{Proof of Theorem \ref{the:lowerbound}}
Given $x_2=x^*,....,x_n=x^*$, we consider the shuffled messages obtained from applying the mechanism $\mech{S}$ to $\mech{R}_1(x_1^0),\mech{R}_2(x^*),\\...,\mech{R}_2(x^*)$ and $\mech{R}_1(x_1^1),\mech{R}_2(x^*),...,\mech{R}_2(x^*)$. Let $g:\dom{Y}\mapsto \{0,1\}^2$ be a post-processing function on a single message, where $\dom{Y}$ denotes the message space. For any $y\in \dom{Y}$, we define $g(y)$ as follows: \begin{equation*}
    g(y)=\left\{
    \begin{array}{@{}lr@{}}
        (1, 0), & \text{if } \mathbb{P}[R_1(x_1^0)=y]> \mathbb{P}[R_1(x_1^1)=y];\\
        (0, 1), & \text{if } \mathbb{P}[R_1(x_1^0)=y]< \mathbb{P}[R_1(x_1^1)=y];\\
        (0, 0), & \text{else.}
    \end{array}
    \right. 
\end{equation*} 
We also define $g_n:\dom{Y}^n\mapsto \mathbb{N}^2$ as a function on $n$ shuffled messages $S$, where $g_n(S)=\sum_{s\in \mathcal{S}} g(s)$ is the vector summation of $g(s)$. It is observed that 
$g_n({\mech{R}_1(x_1^0),\mech{R}_2(x^*),...,}$ ${\mech{R}_2(x^*)})\stackrel{d}{=} P_{p_0,\beta}^{q_0,q_1}$ and $g_n({\mech{R}_1(x_1^1),\mech{R}_2(x^*),...,\mech{R}_2(x^*)})\stackrel{d}{=} Q_{p_0,\beta}^{q_0,q_1}$. By applying the data-processing inequality property of $D$, we arrive at our conclusion.

\section{Proof of Lemma \ref{lemma:clone}}\label{app:clone}
This lemma generalizes the stronger clone reduction presented in \cite[Lemma 3.2]{feldman2023stronger}, where all local randomizers satisfy LDP and $r$ must equals to $\alpha$, albeit with a similar underlying proof. Let $x_1^0,\ldots,x_n$ and $x_1^1,\ldots,x_n$ be two neighboring datasets. We define the following distributions for $i\in [1,n]$:\begin{equation}\label{eq:intermediater}
{Y}_i = \begin{cases} 0 & \text{w.p.   } r\\ 1 & \text{w.p.   } r \\ i+1 & \text{w.p.   } 1-2r \end{cases}\text{,\ \ \  }\;\;\;\;{Y_1^0} = \begin{cases} 0 & \text{w.p.   } p\alpha\\ 1 & \text{w.p.   } \alpha \\ 2 & \text{w.p.   } 1-p\alpha-\alpha \end{cases}\text{,\ \ \  }\;\;\;\;{Y_1^1} = \begin{cases} 0 & \text{w.p.   } \alpha\\ 1 & \text{w.p.   } p\alpha \\ 2 & \text{w.p.   } 1-p\alpha-\alpha \end{cases}.
\end{equation}
We now consider two independent sampling processes:
(i) we draw one sample from $Y_1^0$, and one sample from every $Y_i$ (for $i\in [2,n]$);
(ii) we draw one sample from $Y_1^1$, and one sample from every $Y_i$ (for $i\in [1,n]$).
Using the mixture property of $\mech{R}_1$ and $\mech{R}_i$ (for $i\in [2,n]$), we obtain $\mech{S}(\mech{R}_1(x_1^0),\ldots,\mech{R}_n(x_n))$ and $\mech{S}(\mech{R}_1(x_1^1),\ldots,\mech{R}_n(x_n))$ by post-processing $\mech{S}(Y_1^0, Y_2,\ldots, Y_n)$ and $\mech{S}(Y_1^1, Y_2,\ldots, Y_n)$ through some function $G:[0,n+1]^n\mapsto \dom{Y}^n$. 

We now focus on the statistical divergence between $\mech{S}(Y_1^0, Y_2,\ldots, Y_n)$ and $\mech{S}(Y_1^1, Y_2,\ldots, Y_n)$. Since they differ only in random variables $Y_1^0, Y_1^1$, and the two distributions differ only in the probability distribution over ${0,1}$, the $2,3,...,n+1$ terms in $\mech{S}(Y_1^0, Y_2,\ldots, Y_n)$ and $\mech{S}(Y_1^1, Y_2,\ldots, Y_n)$ can be omitted. Formally, for any distance measure $D$ that satisfies the data processing inequality, our goal is to prove that the following inequalities hold:
\begin{alignat*}{2}
D(\mech{S}(Y_1^0, Y_2,\ldots, Y_n)\|\mech{S}(Y_1^1, Y_2,\ldots, Y_n))\leq D(P' \| Q'),\\
D(\mech{S}(Y_1^0, Y_2,\ldots, Y_n)\|\mech{S}(Y_1^1, Y_2,\ldots, Y_n))\geq D(P' \| Q').
\end{alignat*}
For the first inequality, we define the following post-processing function over $P'$ or $Q'$: (1) assume two numbers in $P'$ or $Q'$ are $(a,b)$, uniformly sample $n-a-b$ elements from $[3,n+1]$ (denoted as $E$); (2) with probability of $\frac{(n-a-b)(1-p\alpha-\alpha)(2r)}{(n-a-b)(1-p\alpha-\alpha)(2r)+(a+b)(p\alpha+\alpha)(1-2r)}$, replace one uniform-random element in $E$ with $2$; (3) initialize a list of $a$-repeat $0$s and $b$-repeat $1$s, then append $E$ to the list, finally uniform-randomly shuffle the list. It can be observed that the post-processing result of $P'$ distributionally equals to $\mech{S}(Y_1^0, Y_2,\ldots, Y_n)$ and the post-processing result of $Q'$ distributionally equals to $\mech{S}(Y_1^1, Y_2,\ldots, Y_n)$. By applying the data processing inequality, we obtain $D(\mech{S}(Y_1^0, Y_2,\ldots, Y_n)\|\mech{S}(Y_1^1, Y_2,\ldots, Y_n))\leq D(P' \| Q')$.
For the second inequality, we define the following post-processing function over the output of $\mech{S}(Y_1, Y_2,\ldots, Y_n)$ (either $Y_1=Y_1^0$ or $Y_1=Y_1^1$): remove all $2,\dots,n+1$ from the output. Now observe that the number of $0$s and $1$s in $\mech{S}(Y_1^0, Y_2,\ldots, Y_n)$ follows the distribution $P'=(A+\Delta_1, C-A+\Delta_2)$, while the number of $0$s and $1$s in $\mech{S}(Y_1^1, Y_2,\ldots, Y_n)$ follows the distribution $Q'=(A+\Delta_2, C-A+\Delta_1)$. By applying the data processing inequality, we obtain $D(\mech{S}(Y_1^0, Y_2,\ldots, Y_n)\|\mech{S}(Y_1^1, Y_2,\ldots, Y_n))\geq D(P' \| Q')$.

Combining the previous two paragraphs, we get $D(\mech{S}(\mech{R}_1(x_1^0),..,\mech{R}_n(x_n)) \| \mech{S}(\mech{R}_1(x_1^1),..,\mech{R}_n(x_n)))
\leq D(\mech{S}(Y_1^0, Y_2,\ldots, Y_n)\|\mech{S}(Y_1^1, Y_2,\ldots, Y_n))\leq D(P' \| Q')$.

\section{Proof of Lemma \ref{lemma:nondecrease}}\label{app:nondecrease}
Let us define a post-processing function $g: \mathbb{N}^2\mapsto \mathbb{N}^2$ as $g(d, e)=(\operatorname{Binomial}(d, \beta'/\beta), \operatorname{Binomial}(e, \beta'/\beta))$. To prove the post-processing inequality of distance measure $D$, we need to show that $g(P^{q}_{p,\beta})\stackrel{d}{=}P^{q}_{\beta',p}$ and $g(Q^{q}_{p,\beta})\stackrel{d}{=}Q^{q}_{\beta',p}$.

We shall use an equivalent sampling process for $P^{q}_{p,\beta}$ and $Q^{q}_{p,\beta}$ similar to Equation \ref{eq:intermediater}. Let us define the following random variables:
\begin{equation*}
{G}_i = \begin{cases} (1,0) & \text{w.p.   } r\\ (0,1) & \text{w.p.   } r \\ (0,0) & \text{w.p.   } 1-2r \end{cases}\text{,\ \ }\;\;{G_1^{0}} = \begin{cases} (1,0) & \text{w.p.   } p\alpha\\ (0,1) & \text{w.p.   } \alpha \\ (0,0) & \text{w.p.   } 1-p\alpha-\alpha \end{cases}\text{,\ \ }\;\;{G_1^{1}} = \begin{cases} (1,0) & \text{w.p.   } \alpha\\ (0,1) & \text{w.p.   } p\alpha \\ (0,0) & \text{w.p.   } 1-p\alpha-\alpha \end{cases},
\end{equation*}
then $P^{q}_{p,\beta}\stackrel{d}{=}G_1^{0}+\sum_{i=2}^n G_i$ and $P^{q}_{p,\beta}\stackrel{d}{=} G_1^{1}+\sum_{i=2}^n G_i$, where the same $G_i$ appearing in the two equations are independently sampled.

Let ${G}_1^{0'}=g(G_1^0)$, ${G}_1^{1'}=g(G_1^0)$, and $G'_i=g(G_i)$, they follow distributions:
\begin{equation*}
{G}'_i = \begin{cases} (1,0) & \text{w.p.   } r'\\ (0,1) & \text{w.p.   } r' \\ (0,0) & \text{w.p.   } 1-2r' \end{cases}\text{,\ \ }\;\;{G_1^{0'}} = \begin{cases} (1,0) & \text{w.p.   } p\alpha'\\ (0,1) & \text{w.p.   } \alpha' \\ (0,0) & \text{w.p.   } 1-p\alpha'-\alpha' \end{cases}\text{,\ \ }\;\;{G_1^{1'}} = \begin{cases} (1,0) & \text{w.p.   } \alpha'\\ (0,1) & \text{w.p.   } p\alpha' \\ (0,0) & \text{w.p.   } 1-p\alpha'-\alpha' \end{cases},
\end{equation*}
where $\alpha'=\frac{\beta'}{p-1}$ and $r'=\frac{\alpha' p}{q}$. Therefore, we have $g(P^{q}_{p,\beta})\stackrel{d}{=} G_1^{0'}+\sum_{i=2}^n G'_i \stackrel{d}{=}  P^{q}_{\beta',p}$ and $g(Q^{q}_{p,\beta})\stackrel{d}{=} G_1^{1'}+\sum_{i=2}^n G'_i \stackrel{d}{=} Q^{q}_{\beta',p}$, by applying the post-processing inequality of distance measure $D$.

\section{Detailed Proof of Theorem \ref{the:hsdnumerical}}\label{app:hsdnumerical}
According to the definition of Hockey-stick divergence, we have:
\begin{align}\label{eq:hsformula}
D_{e^\epsilon}(P_{p,\beta}^{q} \| Q_{p,\beta}^{q})=\sum_{a,b\in [0,n]^2}\max\{0, \mathbb{P}[P_{p,\beta}^{q}=(a,b)]-e^\epsilon\mathbb{P}[Q_{p,\beta}^{q}=(a,b)]\}.
\end{align}
Recall that $C\sim Binomial(n-1, 2r)$, $A\sim Binomial(C, 1/2)$, and $\Delta_1=Bernoulli(p \alpha)$ and $\Delta_2=Bernoulli(1-\Delta_1, \alpha/(1-p \alpha))$. First consider $P=(A, C-A)$, $P_0=(A+1, C-A)$ and $P_1=(A, C-A+1)$. Based on the above sampling process, it is derived that
\begin{equation}
\begin{aligned}
&& \frac{\mathbb{P}[P_0=(a,b)]}{\mathbb{P}[P_1=(a,b)]} &= \frac{{n-1\choose a+b-1}(2r)^{a+b-1}(1-2r)^{n-a-b}{a+b-1\choose a-1}(1/2)^{a+b-1}}{{n-1\choose a+b-1}(2r)^{a+b-1}(1-2r)^{n-a-b}{a+b-1\choose a}(1/2)^{a+b-1}}&\\
&& &=\frac{a}{b}.&\\
\end{aligned}
\end{equation}
Similarly, we have:
\begin{equation}
\begin{aligned}
&& \frac{\mathbb{P}[P=(a,b)]}{\mathbb{P}[P_1=(a,b)]} &= \frac{{n-1\choose a+b}(2r)^{a+b}(1-2r)^{n-a-b-1}{a+b\choose a}(1/2)^{a+b}}{{n-1\choose a+b-1}(2r)^{a+b-1}(1-2r)^{n-a-b}{a+b-1\choose a}(1/2)^{a+b-1}}&\\
&& &=\frac{n-a-b}{a+b}\cdot \frac{2r}{1-2r}\cdot \frac{a+b}{2b}.&\\
&& &=\frac{n-a-b}{b}\cdot \frac{r}{1-2r}.&\\
\end{aligned}
\end{equation}

Now consider two variables $P^{q}_{p,\beta}$ and  $Q^{q}_{p,\beta}$, we have the probability ratio as:
\begin{equation}\label{eq:PQ1}
\begin{aligned}
&& \frac{\mathbb{P}[P^{q}_{p,\beta}=(a,b)]}{\mathbb{P}[Q^{q}_{p,\beta}=(a,b)]} &= \frac{p\alpha {\mathbb{P}[P_0=(a,b)]}+\alpha {\mathbb{P}[P_1=(a,b)]}+(1-\alpha-\alpha p){\mathbb{P}[P=(a,b)]}}
{\alpha {\mathbb{P}[P_0=(a,b)]}+p \alpha {\mathbb{P}[P_1=(a,b)]}+(1-\alpha-\alpha p) {\mathbb{P}[P=(a,b)]}}&\\
&& &=\frac{(p-1)\alpha {\mathbb{P}[P_0=(a,b)]}+(1-p)\alpha {\mathbb{P}[P_1=(a,b)]}}{\alpha {\mathbb{P}[P_0=(a,b)]}+p \alpha {\mathbb{P}[P_1=(a,b)]}+(1-\alpha-\alpha p){\mathbb{P}[P=(a,b)]}}&\\
&& &= 1+\frac{(p-1)\alpha {a}+(1-p)\alpha {b}}{\alpha {a}+p \alpha {b}+(1-\alpha-\alpha p)\cdot (n-a-b)\cdot \frac{r}{1-2r}}.&\\
&& &= 1+\frac{(p-1)\alpha (a-b)}{\alpha {(a+b)}+(p-1) \alpha {b}+(1-\alpha-\alpha p)\cdot (n-(a+b))\cdot \frac{r}{1-2r}}.&\\
\end{aligned}
\end{equation}
A key observation is that the above formula of $\frac{\mathbb{P}[P^{q}_{p,\beta}=(a,b)]}{\mathbb{P}[Q^{q}_{p,\beta}=(a,b)]}$ monotonically increases with $a$ when $a+b$ is fixed. Therefore, we let $c=a+b$, then the condition $\frac{\mathbb{P}[P^{q}_{p,\beta}=(a,b)]}{\mathbb{P}[Q^{q}_{p,\beta}=(a,b)]}=\frac{p\alpha {a}+\alpha{b}+(1-\alpha-\alpha p)(n-a-b)\cdot \frac{r}{1-2r}}{\alpha {a}+p \alpha {b}+(1-\alpha-\alpha p)(n-a-b)\cdot \frac{r}{1-2r}}> e^{\epsilon'}$ holds \emph{if and only if} when $a> low_c=\frac{(e^{\epsilon'}p-1)\alpha c+(e^{\epsilon'}-1)f}{\alpha(e^{\epsilon'}+1)(p-1)}$. Therefore, the Equation \ref{eq:hsformula} becomes:
\begin{align*}\label{eq:hsformula}
&&D_{e^\epsilon}(P_{p,\beta}^{q} \| Q_{p,\beta}^{q})&=\sum_{c\in [0,n]}\sum_{a\in [\lceil low_c \rceil,c]}\mathbb{P}[P_{p,\beta}^{q}=(a,c-a)]-e^\epsilon\mathbb{P}[Q_{p,\beta}^{q}=(a,c-a)]&\\
&& &=\sum_{c\in [0,n]}\sum_{a\in [\lceil low_c \rceil,c]} (p-e^{\epsilon'})\alpha\mathbb{P}[P_{0}=(a,c-a)] &\\
&& &\ \ \ \ +\sum_{c\in [0,n]}\sum_{a\in [\lceil low_c \rceil,c]} (1-p e^{\epsilon'})\alpha\mathbb{P}[P_{1}=(a,c-a)] &\\
&& &\ \ \ \ +\sum_{c\in [0,n]}\sum_{a\in [\lceil low_c \rceil,c]} (1-\alpha-\alpha p)(1-e^{\epsilon'})\mathbb{P}[P=(a,c-a)]& \\
&& &=(p-e^{\epsilon'})\alpha\sum_{c\in [0,n]}{n-1\choose c-1}(2r)^{c-1}(1-2r)^{n-c}\Big(\sum_{a\in [\lceil low_c \rceil,c]} {c-1\choose a-1}(1/2)^{c-1}\Big) &\\
&& &\ \ \ \ +(1-p e^{\epsilon'})\alpha\sum_{c\in [0,n]}{n-1\choose c-1}(2r)^{c-1}(1-2r)^{n-c}\Big(\sum_{a\in [\lceil low_c \rceil,c]} {c-1\choose a}(1/2)^{c-1}\Big) &\\
&& &\ \ \ \ +(1-\alpha-\alpha p)(1-e^{\epsilon'})\sum_{c\in [0,n]}{n-1\choose c}(2r)^{c}(1-2r)^{n-c-1}\Big(\sum_{a\in [\lceil low_c \rceil,c]} {c\choose a}(1/2)^{c}\Big).&
\end{align*}
Notice that the formula $\sum_{c\in [0,n]}{n-1\choose c-1}(2r)^{c-1}(1-2r)^{n-c}\Big(\sum_{a\in [\lceil low_c \rceil,c]} {c-1\choose a-1}(1/2)^{c-1}\Big)$ equals to:
$$\small\sum_{c\in [0,n-1]}{n-1\choose c}(2r)^{c}(1-2r)^{n-c-1}\Big(\sum_{a\in [\lceil low_{c+1} -1\rceil,c]} {c\choose a}(1/2)^{c}\Big)=\mathop{\mathbb{E}}\limits_{c\sim Binom(n-1,2r)}\mathop{\mathsf{CDF}}\limits_{c,1/2}[\lceil low_{c+1}-1\rceil, c];$$ 
 the formula  $\sum_{c\in [0,n]}{n-1\choose c-1}(2r)^{c-1}(1-2r)^{n-c}\Big(\sum_{a\in [\lceil low_c \rceil,c]} {c-1\choose a}(1/2)^{c-1}\Big)$ equals to:
$$\small\sum_{c\in [0,n-1]}{n-1\choose c}(2r)^{c}(1-2r)^{n-c-1}\Big(\sum_{a\in [\lceil low_{c+1} \rceil,c]} {c\choose a}(1/2)^{c}\Big)=\mathop{\mathbb{E}}\limits_{c\sim Binom(n-1,2r)}\mathop{\mathsf{CDF}}\limits_{c,1/2}[\lceil low_{c+1}\rceil, c];$$
 the formula $\sum_{c\in [0,n]}{n-1\choose c}(2r)^{c}(1-2r)^{n-c-1}\Big(\sum_{a\in [\lceil low_c \rceil,c]} {c\choose a}(1/2)^{c}\Big)$ equals to:
$$\mathop{\mathbb{E}}\limits_{c\sim Binom(n-1,2r)}\mathop{\mathsf{CDF}}\limits_{c,1/2}[\lceil low_c\rceil, c].$$
Combining these three equations, we have proved the equation about $D_{e^\epsilon}(P_{p,\beta}^{q} \| Q_{p,\beta}^{q})$.

As for $D_{e^\epsilon}(Q_{p,\beta}^{q} \| P_{p,\beta}^{q})$, a key observation is that $\frac{\mathbb{P}[P^{q}_{p,\beta}=(a,b)]}{\mathbb{P}[Q^{q}_{p,\beta}=(a,b)]}$ monotonically decreases with $a$ when $a+b$ is fixed, and the condition $\frac{\mathbb{P}[P^{q}_{p,\beta}=(a,b)]}{\mathbb{P}[Q^{q}_{p,\beta}=(a,b)]}=\frac{p\alpha {a}+\alpha{b}+(1-\alpha-\alpha p)(n-a-b)\cdot \frac{r}{1-2r}}{\alpha {a}+p \alpha {b}+(1-\alpha-\alpha p)(n-a-b)\cdot \frac{r}{1-2r}}< e^{-\epsilon'}$ holds \emph{if and only if} when $a< high_c=\frac{(e^{-\epsilon'}p-1)\alpha c+(e^{-\epsilon'}-1)f}{\alpha(e^{-\epsilon'}+1)(p-1)}$. As with previous procedures, the present analysis derives the equation governing $D_{e^\epsilon}(Q_{p,\beta}^{q} \| P_{p,\beta}^{q})$.

\section{Proof of Theorem \ref{the:tightlevel}}\label{app:analyticbound}
Recall that $C\sim Binomial(n-1, 2r)$, $A\sim Binomial(C, 1/2)$, and $\Delta_1=Bernoulli(p \alpha)$ and $\Delta_2=Bernoulli(1-\Delta_1, \alpha/(1-p \alpha))$. Similar to proof for Theorem \ref{the:hsdnumerical}, we first consider $P=(A, C-A)$, $P_0=(A+1, C-A)$ and $P_1=(A, C-A+1)$. Use the fact that $C=a+b$ or $C=a+b-1$ and $|A-C/2|< \sqrt{C/2\log(4/\delta)}$ holds with probability $1-\delta/2$ (by the Hoeffding's inequality), we have the probability ratio upper bounded as:
\begin{equation*}
\begin{aligned}
&& \frac{\mathbb{P}[P^{q}_{p,\beta}=(a,b)]}{\mathbb{P}[Q^{q}_{p,\beta}=(a,b)]} &= \frac{p\alpha {\mathbb{P}[P_0=(a,b)]}+\alpha {\mathbb{P}[P_1=(a,b)]}+(1-\alpha-\alpha p){\mathbb{P}[P=(a,b)]}}
{\alpha {\mathbb{P}[P_0=(a,b)]}+p \alpha {\mathbb{P}[P_1=(a,b)]}+(1-\alpha-\alpha p) {\mathbb{P}[P=(a,b)]}}&\\
&& &=1+\frac{(p-1)\alpha {\mathbb{P}[P_0=(a,b)]}+(1-p)\alpha {\mathbb{P}[P_1=(a,b)]}}{\alpha {\mathbb{P}[P_0=(a,b)]}+p \alpha {\mathbb{P}[P_1=(a,b)]}+(1-\alpha-\alpha p){\mathbb{P}[P=(a,b)]}}&\\
&& &= 1+\frac{(p-1)\alpha ((a+b)-2b)}{\alpha {(a+b)}+(p-1) \alpha {b}+(1-\alpha-\alpha p)\cdot (n-(a+b))\cdot \frac{r}{1-2r}}.&\\
&& &\leq 1+\frac{(p-1)\alpha (C+1-2b)}{\alpha C+(p-1) \alpha {b}+(1-\alpha-\alpha p)\cdot (n-1-C)\cdot \frac{r}{1-2r}}&\\
&& &\leq 1+\frac{(p-1)\alpha (2\sqrt{C/2\log(4/\delta)}+1)}{\alpha C+(p-1) \alpha (C/2-\sqrt{C/2\log(4/\delta)})+(1-\alpha-\alpha p)\cdot (n-1-C)\cdot \frac{r}{1-2r}}.&\\
\end{aligned}
\end{equation*}
Based on the induced formula in the last line (denoted as $1+F(C)$), if the coefficient $\alpha+(p-1)\alpha/2-(1-\alpha-\alpha p)r/(1-2r)$ of $C$ in the denominator is no less than $0$, the derivative $\frac{\mathrm{d} F}{\mathrm{d} C}$ is lower than $0$ when $C>\frac{2p(\beta+1+(\beta-1)p)(n-1)+\beta}{q+p(\beta-1+(\beta+1)p)-pq}$. Now focus on the variable $C$, according to the multiplicative Chernoff bound and Hoeffding's inequality, it is derived that $C\geq (n-1)2r-\sqrt{\min\{6r,1/2\}(n-1)\log(4/\delta)}$ holds with probability at least $1-\delta/2$. Therefore, if $\Omega=(n-1)2r-\sqrt{\min\{6r,1/2\}(n-1)\log(4/\delta)}\geq \frac{2p(\beta+1+(\beta-1)p)(n-1)+\beta}{q+p(\beta-1+(\beta+1)p)-pq}$, then with probability at least $1-\delta/2$, the $F(C)\leq F(\Omega)$ and hence $\frac{\mathbb{P}[P^{q}_{p,\beta}=(a,b)]}{\mathbb{P}[Q^{q}_{p,\beta}=(a,b)]}\leq e^{\epsilon}$ holds. 

Similarly, we have the probability ratio lower bounded as:
\begin{alignat*}{5}
&& \frac{\mathbb{P}[P^{q}_{p,\beta}=(a,b)]}{\mathbb{P}[Q^{q}_{p,\beta}=(a,b)]} &= \frac{p\alpha  {a}+\alpha b +(1-\alpha-\alpha p)\cdot (n-a-b)\cdot \frac{r}{1-2r}}{\alpha {a}+p \alpha b+(1-\alpha-\alpha p)\cdot (n-a-b)\cdot \frac{r}{1-2r}}&\\
&& &={1}/({\frac{\alpha {a}+p \alpha b+(1-\alpha-\alpha p)\cdot (n-a-b)\cdot \frac{r}{1-2r}}{p\alpha {a}+\alpha b+(1-\alpha-\alpha p)\cdot (n-a-b)\cdot \frac{r}{1-2r}}}) &\\
&& &={1}/(1+{\frac{(p-1) \alpha ((a+b)-2a)}{\alpha (a+b)+(p-1)\alpha a+(1-\alpha-\alpha p)\cdot (n-(a+b))\cdot \frac{r}{1-2r}}}) &\\
&& &\geq 1/(1+\frac{(p-1)\alpha (C+1-2a)}{\alpha C+(p-1) \alpha {a}+(1-\alpha-\alpha p)\cdot (n-1-C)\cdot \frac{r}{1-2r}})&\\
&& &\geq 1/(1+\frac{(p-1)\alpha (2\sqrt{C/2\log(4/\delta)}+1)}{\alpha C+(p-1) \alpha (C/2-\sqrt{C/2\log(4/\delta)})+(1-\alpha-\alpha p)\cdot (n-1-C)\cdot \frac{r}{1-2r}}).&
\end{alignat*}
Then under the same condition about $C$, we have $\frac{\mathbb{P}[P^{q}_{p,\beta}=(a,b)]}{\mathbb{P}[Q^{q}_{p,\beta}=(a,b)]}\geq e^{-\epsilon}$ holds.

\section{Proof of Theorem \ref{the:level}}\label{app:asymptoticbound}
According to multiplicative Chernoff bound and Hoeffding's inequality, we have $|C-(n-1)2r|< \sqrt{\min\{6r,1/2\}(n-1)\log(4/\delta)}$ holds with probability at least $1-\delta/2$; according to Hoeffding's inequality, $|A-C/2|< \sqrt{C/2\log(4/\delta)}$ holds with probability $1-\delta/2$. Specifically, when $n\geq \frac{8\log(2/\delta)}{r}$, both $\sqrt{\min\{6r,1/2\}(n-1)\log(4/\delta)}< \min\{r, \sqrt{r}/4, 1-2r\}(n-1)$ and $\sqrt{C/2\log(4\delta)}< C/4$ hold. The remaining proof conditions on these events.

It is observed that:
\begin{equation}
\begin{aligned}
&& \frac{n-a-b}{b}\cdot \frac{r}{1-2r} &=\frac{n-C}{C-A}\cdot \frac{r}{1-2r} &\\
&& &\geq\frac{n-C}{C}\cdot \frac{4r}{3(1-2r)} &\\
&& &\geq\frac{n-(n-1)2r-\min\{r, \sqrt{r}/4, 1-2r\}(n-1)}{(n-1)2r+\min\{r, \sqrt{r}/4, 1-2r\}(n-1)}\cdot \frac{4r}{3(1-2r)} &\\
&& &\geq\frac{1-2r-\min\{r, \sqrt{r}/4, 1-2r\}}{2r+\min\{r, \sqrt{r}/4, 1-2r\}}\cdot \frac{4r}{3(1-2r)} &\\
&& &\geq\frac{4(1-3r)}{9(1-2r)}. &
\end{aligned}
\end{equation}
By symmetry of $a$ and $b$, we also have $\frac{n-a-b}{a}\cdot \frac{r}{1-2r}\geq \frac{4(1-3r)}{9(1-2r)}$. We use $c_r$ to denote $\max\{0, \frac{4(1-3r)}{9(1-2r)}\}$.

Based on the classical clone reduction \cite[Lemma A.3]{feldman2023stronger}, when $n\geq \frac{8\log(2/\delta)}{r}$, it is derived that the following two inequalities hold with $\epsilon'=\log(1+\sqrt{\frac{32\log(4/\delta)}{r (n-1)}}+\frac{4}{r n})$: 
\[e^{-\epsilon'}\leq \frac{\mathbb{P}[P_0=(a,b)]}{\mathbb{P}[P_1=(a,b)]}\leq e^{\epsilon'},\]
Now notice that $P^{q}_{p,\beta}=p \alpha P_0+\alpha P_1+(1-\alpha-\alpha p)P$ and $Q^{q}_{p,\beta}=\alpha P_0+p \alpha P_1+(1-\alpha-\alpha p)P$, we have:
\begin{equation}\label{eq:PQ}
\begin{aligned}
&& \frac{\mathbb{P}[P^{q}_{p,\beta}=(a,b)]}{\mathbb{P}[Q^{q}_{p,\beta}=(a,b)]} &= \frac{p\alpha \frac{\mathbb{P}[P_0=(a,b)]}{\mathbb{P}[P_1=(a,b)]}+\alpha+(1-\alpha-\alpha p)\frac{\mathbb{P}[P=(a,b)]}{\mathbb{P}[P_1=(a,b)]}}
{\alpha \frac{\mathbb{P}[P_0=(a,b)]}{\mathbb{P}[P_1=(a,b)]}+p \alpha+(1-\alpha-\alpha p)\frac{\mathbb{P}[P=(a,b)]}{\mathbb{P}[P_1=(a,b)]}}&\\
&& &=1+\frac{(p-1)\alpha \frac{\mathbb{P}[P_0=(a,b)]}{\mathbb{P}[P_1=(a,b)]}+(1-p)\alpha}{\alpha \frac{\mathbb{P}[P_0=(a,b)]}{\mathbb{P}[P_1=(a,b)]}+p \alpha+(1-\alpha-\alpha p)\frac{\mathbb{P}[P=(a,b)]}{\mathbb{P}[P_1=(a,b)]}}&\\
&& &\leq 1+\frac{(p-1)\alpha\frac{a}{b}+(1-p)\alpha}{\alpha \frac{a}{b}+p \alpha+(1-\alpha-\alpha p)\cdot\frac{n-a-b}{b}\cdot \frac{r}{1-2r}}&\\
&& &\leq 1+\frac{(p-1)\alpha\frac{a}{b}+(1-p)\alpha}{\alpha \frac{a}{b}+p \alpha+(1-\alpha-\alpha p)\cdot c_r}&\\
&& &\leq 1+\frac{(p-1)\alpha e^{\epsilon'}+(1-p)\alpha}{\alpha e^{\epsilon'}+p \alpha+(1-\alpha-\alpha p)\cdot c_r}&\\
&& &\leq 1+\frac{(p-1)\alpha}{\alpha+p \alpha+(1-\alpha-\alpha p)\cdot c_r}\cdot (e^{\epsilon'}-1)&\\
&& &\leq 1+\frac{\beta}{\alpha+p \alpha+(1-\alpha-\alpha p)\cdot c_r}\cdot (e^{\epsilon'}-1).&
\end{aligned}
\end{equation}
Besides, we have:
\begin{equation}\label{eq:QP}
\begin{aligned}
&& \frac{\mathbb{P}[P^{q}_{p,\beta}=(a,b)]}{\mathbb{P}[Q^{q}_{p,\beta}=(a,b)]} &= \frac{p\alpha \frac{a}{b}+\alpha+(1-\alpha-\alpha p)\cdot\frac{n-a-b}{b}\cdot \frac{r}{1-2r}}{\alpha \frac{a}{b}+p \alpha+(1-\alpha-\alpha p)\cdot\frac{n-a-b}{b}\cdot \frac{r}{1-2r}}&\\
&& &={1}/({\frac{\alpha \frac{a}{b}+p \alpha+(1-\alpha-\alpha p)\cdot\frac{n-a-b}{b}\cdot \frac{r}{1-2r}}{p\alpha \frac{a}{b}+\alpha+(1-\alpha-\alpha p)\cdot\frac{n-a-b}{b}\cdot \frac{r}{1-2r}}}) &\\
&& &={1}/({\frac{\alpha +p \alpha \frac{b}{a}+(1-\alpha-\alpha p)\cdot\frac{n-a-b}{b}\cdot \frac{r}{1-2r}\frac{b}{a}}{p\alpha+\alpha \frac{b}{a} +(1-\alpha-\alpha p)\cdot\frac{n-a-b}{b}\cdot \frac{r}{1-2r}\frac{b}{a}}}) &\\
&& &\geq {1}/({\frac{\alpha +p \alpha e^{\epsilon'}+(1-\alpha-\alpha p)\cdot\frac{n-a-b}{a}\cdot \frac{r}{1-2r}}{p\alpha+\alpha e^{\epsilon'} +(1-\alpha-\alpha p)\cdot\frac{n-a-b}{a}\cdot \frac{r}{1-2r}}}) &\\
&& &\geq {1}/({1+\frac{(p-1)\alpha}{p\alpha+\alpha+(1-\alpha-\alpha p)c_r}\cdot (e^{\epsilon'}-1)}) &\\
&& &\geq {1}/({1+\frac{\beta}{{p\alpha+\alpha+(1-\alpha-\alpha p)c_r}}\cdot (e^{\epsilon'}-1)}).&
\end{aligned}
\end{equation}

By combining Equations \ref{eq:PQ} and \ref{eq:QP}, it follows that, with a probability of $1-\delta$, the inequality $\frac{\mathbb{P}[P^{q}_{p,\beta}=(a,b)]}{\mathbb{P}[Q^{q}_{p,\beta}=(a,b)]}\in [e^{-\epsilon}, e^{\epsilon}]$ holds. The value of $\epsilon$ is defined as $\log(1+\frac{\beta}{\beta(1+p)/(p-1)+(1-\beta(1+p)/(p-1))c_r})(\sqrt{\frac{32\log(4/\delta)}{r (n-1)}}+\frac{4}{r n}))$.

\section{Probability ratio of $P^{q_0,q_1}_{p_0,\beta}$ and $Q^{q_0,q_1}_{p_0,\beta}$}\label{app:ratiolower}

Recall that for $p_0> 1, \beta\in [0, \frac{p_0-1}{p_0+1}],  q_0,q_1\in [1,+\infty)$ such that $q_0\leq p_0 q_1$ and $q_1\leq p_0 q_0$, we define $\alpha$ as $\frac{\beta}{(p_0-1)}$, $r_0$ as $\frac{\alpha p_0}{q_0}$ and $r_1$ as $\frac{\alpha p_0}{q_1}$, and $C\sim Binom(n-1, r_0+r_1)$, $A\sim Binom(C, r_0/(r_0+r_1))$, and $\Delta_1=Bernoulli(p_0 \alpha')$ and $\Delta_2=Bernoulli(1-\Delta_1, \alpha/(1-p_0 \alpha))$. The random variable $P_{p_0,\beta}^{q_0,q_1}$ corresponds to $(A+\Delta_1, C-A+\Delta_2)$ and the $Q_{p_0,\beta}^{q_0,q_1}$ corresponds to $(A+\Delta_2, C-A+\Delta_1)$ from two independent samplings.

To see how the algorithm works, we put forth following analyses on the two variables. According to the sampling process, we have: 
\begin{align*}
&& \mathbb{P}[P_{p_0,\beta}^{q_0,q_1} = (a,b)]=&p\alpha {n-1\choose a+b-1}(r_0+r_1)^{a+b-1}(1-r_0-r_1)^{n-a-b}{a+b-1\choose a-1}\frac{(r_0)^{a-1}(r_1)^{b}}{(r_0+r_1)^{a+b-1}} &\\
&& &+\alpha {n-1\choose a+b-1}(r_0+r_1)^{a+b-1}(1-r_0-r_1)^{n-a-b}{a+b-1\choose a}\frac{(r_0)^a(r_1)^{b-1}}{(r_0+r_1)^{a+b-1}} &\\
&& &+(1-\alpha-p_0\alpha) {n-1\choose a+b}(r_0+r_1)^{a+b}(1-r_0-r_1)^{n-a-b-1}{a+b\choose a}\frac{(r_0)^a(r_1)^{b}}{(r_0+r_1)^{a+b}}.&
\end{align*}
Similarly, we have: 
\begin{align*}
&& \mathbb{P}[Q_{p_0,\beta}^{q_0,q_1} = (a,b)] =& \alpha {n-1\choose a+b-1}(r_0+r_1)^{a+b-1}(1-r_0-r_1)^{n-a-b}{a+b-1\choose a-1}\frac{(r_0)^{a-1}(r_1)^{b}}{(r_0+r_1)^{a+b-1}} &\\
&& &+p_0\alpha {n-1\choose a+b-1}(r_0+r_1)^{a+b-1}(1-r_0-r_1)^{n-a-b}{a+b-1\choose a}\frac{(r_0)^a(r_1)^{b-1}}{(r_0+r_1)^{a+b-1}} &\\
&& &+(1-\alpha-p_0\alpha) {n-1\choose a+b}(r_0+r_1)^{a+b}(1-r_0-r_1)^{n-a-b-1}{a+b\choose a}\frac{(r_0)^a(r_1)^{b}}{(r_0+r_1)^{a+b}}.&
\end{align*}
Then, it is observed that:
$$\frac{{n-1\choose a+b-1}(r_0+r_1)^{a+b-1}(1-r_0-r_1)^{n-a-b}{a+b-1\choose a}\frac{(r_0)^a(r_1)^{b-1}}{(r_0+r_1)^{a+b-1}}}{{n-1\choose a+b-1}(r_0+r_1)^{a+b-1}(1-r_0-r_1)^{n-a-b}{a+b-1\choose a-1}\frac{(r_0)^{a-1}(r_1)^{b}}{(r_0+r_1)^{a+b-1}}}=\frac{r_0 b}{r_1 a},$$
$$\frac{{n-1\choose a+b}(r_0+r_1)^{a+b}(1-r_0-r_1)^{n-a-b-1}{a+b\choose a}\frac{(r_0)^a(r_1)^{b}}{(r_0+r_1)^{a+b}}}{{n-1\choose a+b-1}(r_0+r_1)^{a+b-1}(1-r_0-r_1)^{n-a-b}{a+b-1\choose a-1}\frac{(r_0)^{a-1}(r_1)^{b}}{(r_0+r_1)^{a+b-1}}}=\frac{r_0(n-a-b)}{(1-r_0-r_1)a}.$$
Consequently, we get:
\begin{align}\label{eq:PQ11}
&& \frac{\mathbb{P}[P_{p_0,\beta}^{q_0,q_1} = (a,b)]}{\mathbb{P}[Q_{p_0,\beta}^{q_0,q_1} = (a,b)]} = \frac{p_0\alpha a/r_0+\alpha b/r_1+(1-\alpha-p_0 \alpha)(n-a-b)/(1-r_0-r_1))}{\alpha a/r_0+p_0\alpha b/r_1+(1-\alpha-p_0 \alpha)(n-a-b)/(1-r_0-r_1))}.&
\end{align}

\section{Numerical Lower Bounds}
We proceed to numerically compute the lower bound for privacy amplification. While a naive approach would involve enumerating the entire output space of $P_{p,\beta}^{q_0,q_1}$ and $Q_{p,\beta}^{q_0,q_1}$ with $O(T n^2)$ complexities, we propose a more efficient implementation. Assuming $q_0/q_1\in [1/p,p]$, and with $a+b$ fixed, we observe that the ratio $\frac{\mathbb{P}[P_{p,\beta}^{q_0,q_1} = (a,b)]}{\mathbb{P}[Q_{p,\beta}^{q_0,q_1} = (a,b)]}$ monotonically increases with $a$ (see Appendix \ref{app:ratiolower} for details). Specifically, let $g$ denote $(1-\alpha-\alpha p)(n-c)\cdot \frac{1}{1-r_0-r_1}$, then if $a>low$, where $low=\frac{(e^{\epsilon'}p-1)\alpha c/r_1+(e^{\epsilon'}-1)g}{\alpha(p/r_0-1/r_1+e^{\epsilon'}(p/r_1-1/r_0))}$, the ratio exceeds $e^{\epsilon}$. If $a<high$, where $high=\frac{(e^{-\epsilon'}p-1)\alpha c/r_1+(e^{-\epsilon'}-1)g}{\alpha(p/r_0-1/r_1+e^{-\epsilon'}(p/r_1-1/r_0))}$, the ratio is lower than $e^{-\epsilon}$. We present the divergence in an expectation form in Proposition \ref{pro:hsdnumerical2} and provide an efficient implementation in Algorithm \ref{alg:biasedlowerbound} with $\tilde{O}(T n)$ complexities (see Appendix \ref{app:lower}).

\begin{proposition}[Divergence bound as an expectation]\label{pro:hsdnumerical2}
For $p> 1, \beta\in [0, \frac{p-1}{p+1}],  q_0,q_1\in [1,\infty)$ that $q_0/q_1\in [1/p,p]$, let $\alpha=\frac{\beta}{p-1}$, $r_0=\frac{\alpha p}{q_0}$, and $r_1=\frac{\alpha p}{q_1}$, let $low_c=\frac{(e^{\epsilon}p-1)\alpha c/r_1+(e^{\epsilon}-1)(1-\alpha-\alpha p)(n-c)/(1-r_0-r_1)}{\alpha(p/r_0-1/r_1+e^{\epsilon}(p/r_1-1/r_0))}$ and $high_c=\frac{(e^{-\epsilon}p-1)\alpha c/r_1+(e^{-\epsilon}-1)(1-\alpha-\alpha p)(n-c)/(1-r_0-r_1)}{\alpha(p/r_0-1/r_1+e^{-\epsilon}(p/r_1-1/r_0))}$, then for any $\epsilon \in \mathbb{R}$:
\begin{align*}
&&& D_{e^\epsilon}(P_{p,\beta}^{q_0,q_1} \| Q_{p,\beta}^{q_0,q_1})= \mathop{\mathbb{E}}\limits_{c\sim Binomial(n-1,r_0+r_1)}\Big[ (p-e^{\epsilon})\alpha\cdot \mathop{\mathsf{CDF}}\limits_{c,\frac{r_0}{r_0+r_1}}[\lceil low_{c+1}-1\rceil, c] &\\
&&&\ \ \ \ \ \ \ \ \ \ \ \ \ \ \ \ \ \ \ \ \ \ \ \ \ \ \ \ \ \ \ \ \ +(1-p e^\epsilon)\alpha\cdot\mathop{\mathsf{CDF}}\limits_{c,\frac{r_0}{r_0+r_1}}[\lceil low_{c+1}\rceil, c] +(1-e^\epsilon)(1-\alpha-p\alpha)\cdot \mathop{\mathsf{CDF}}\limits_{c,\frac{r_0}{r_0+r_1}}[\lceil low_c\rceil, c]\Big], &\\
&&& D_{e^\epsilon}(Q_{p,\beta}^{q_0,q_1} \| P_{p,\beta}^{q_0,q_1})= \mathop{\mathbb{E}}\limits_{c\sim Binomial(n-1,r_0+r_1)}\Big[(1-pe^{\epsilon})\alpha\cdot \mathop{\mathsf{CDF}}\limits_{c,\frac{r_0}{r_0+r_1}}[0,\lfloor high_{c+1}-1\rceil] &\\
&&&\ \ \ \ \ \ \ \ \ \ \ \ \ \ \ \ \ \ \ \ \ \ \ \ \ \ \ \ \ \ \ \ \ +(p- e^\epsilon)\alpha\cdot\mathop{\mathsf{CDF}}\limits_{c,\frac{r_0}{r_0+r_1}}[0,\lfloor high_{c+1}\rfloor] +(1-e^\epsilon)(1-\alpha-p\alpha)\cdot \mathop{\mathsf{CDF}}\limits_{c,\frac{r_0}{r_0+r_1}}[0,\lfloor high_c\rfloor]\Big].&
\end{align*}
\end{proposition}

It is worth noting that the upper bound of indistinguishability between $P^{q_0,q_1}_{p,\beta}$ and $Q^{q_0,q_1}_{p,\beta}$, which refers to the minimum value of $\epsilon'$ that satisfies $\max[D_{e^{\epsilon'}}(P^{q_0,q_1}_{p,\beta}\| Q^{q_0,q_1}_{p,\beta}), D_{e^{\epsilon'}}(Q^{q_0,q_1}_{p,\beta}\| P^{q_0,q_1}_{p,\beta})]\leq \delta$, can be obtained in a similar manner as that of the lower bound. The only difference is that, in Algorithm \ref{alg:biasedlowerbound}, we return $\epsilon_H$ instead of $\epsilon_L$ at the last line. This would be useful to derive precise amplification upper bounds for specific randomizers that are not tight in Theorem \ref{the:reduction}, such as randomized response on $2$ options \cite{warner1965randomized}, local hash with length $l=2$ \cite{wang2017locally}, and exponential mechanism for metric LDP on $3$ options \cite{mcsherry2007mechanism}.

\section{Efficient search of the indistinguishable lower bound}\label{app:lower}
The efficient implementation in Algorithm \ref{alg:biasedlowerbound} relies on Proposition \ref{pro:hsdnumerical2}, which expresses the divergence as an expectation.
\begin{algorithm}[htbp]
    \caption{Efficient Search of the  indistinguishable lower bound of $P^{q_0,q_1}_{p_0,\beta}$ and $Q^{q_0,q_1}_{p_0,\beta}$}
    \label{alg:biasedlowerbound}
    \KwIn{privacy parameter $\delta$, number of clients $n$, property parameters $p_0>1, \beta\in [0, \frac{p_0-1}{p_0+1}]$ and $q_0,q_1\geq 1$ that $q_0/q_1\in [1/p_0,p_0]$, number of iterations $T$.}
    \KwOut{A lower bound of $\epsilon'_c$ that  $\max[D_{\epsilon'_c}(P^{q_0,q_1}_{p_0,\beta}\| Q^{q_0,q_1}_{p_0,\beta}), D_{\epsilon'_c}(Q^{q_0,q_1}_{p_0,\beta}\| P^{q_0,q_1}_{p_0,\beta})]\geq \delta$ holds.}

    {$\alpha=\frac{\beta}{p_0-1}$,\ \ $r_0=\frac{\alpha p_0}{q_0}$,\ \ $r_1=\frac{\alpha p_0}{q_1}$}

    \SetKwProg{myproc}{Procedure}{}{}
    \SetKwFunction{proc}{{Delta}}
    \myproc{\proc{$\epsilon'$}}{
        {$\delta'_0 \leftarrow 0$,}\ \ \ \ {$\delta'_1 \leftarrow 0$}

        {$w_c=\binom{n-1}{c}(2r)^c(1-2r)^{n-1-c}$}

        {$low_c = \frac{(e^{\epsilon'}p_0-1)\alpha c/r_1+(e^{\epsilon'}-1)g}{\alpha(p/r_0-1/r_1+e^{\epsilon'}(p_0/r_1-1/r_0))}$}

        {$high_c = \frac{(e^{-\epsilon'}p_0-1)\alpha c/r_1+(e^{-\epsilon'}-1)g}{\alpha(p_0/r_0-1/r_1+e^{-\epsilon'}(p_0/r_1-1/r_0))}$}
        
        \For{$c \in [0,n]$}{
            
            {$\delta'_0 \leftarrow \delta'_0+w_c\Big((p_0-e^{\epsilon})\alpha\cdot \mathop{\mathsf{CDF}}\limits_{c,\frac{r_0}{r_0+r_1}}[\lceil low_{c+1}-1\rceil, c]+(1-p_0 e^\epsilon)\alpha\cdot\mathop{\mathsf{CDF}}\limits_{c,\frac{r_0}{r_0+r_1}}[\lceil low_{c+1}\rceil, c] +(1-e^\epsilon)(1-\alpha-p_0\alpha)\cdot \mathop{\mathsf{CDF}}\limits_{c,\frac{r_0}{r_0+r_1}}[\lceil low_c\rceil, c]\Big)$}

            {$\delta'_1 \leftarrow \delta'_1+w_c\Big((1-p_0e^{\epsilon})\alpha\cdot \mathop{\mathsf{CDF}}\limits_{c,\frac{r_0}{r_0+r_1}}[0,\lfloor high_{c+1}-1\rceil]+(p_0- e^\epsilon)\alpha\cdot\mathop{\mathsf{CDF}}\limits_{c,\frac{r_0}{r_0+r_1}}[0,\lfloor high_{c+1}\rfloor]+(1-e^\epsilon)(1-\alpha-p\alpha)\cdot \mathop{\mathsf{CDF}}\limits_{c,\frac{r_0}{r_0+r_1}}[0,\lfloor high_c\rfloor]\Big)$}
    
        }
        \KwRet{$\max(\delta'_0, \delta'_1)$}
    }
    
    $\epsilon_{L}\leftarrow 0$,\ \ \ \ $\epsilon_{H}\leftarrow \log(p_0)$
    
    \For{$t \in [y]$}{
    {$\epsilon_t\leftarrow \frac{\epsilon_{L}+\epsilon_{H}}{2}$}
    
        \eIf{$\proc(\epsilon_t)> \delta$}
        {$\epsilon_{L}\leftarrow \epsilon_t$}
        {$\epsilon_{H}\leftarrow \epsilon_t$}
    }
    
    \KwRet{$\epsilon_L$}
\end{algorithm}

\section{Complementary amplification parameters of $\epsilon$-LDP mechanisms}\label{app:ldp}
Zhu \textit{et al.}\cite[Proposition 8]{zhu2022optimal} proved that for any given privatization mechanism, there always exists a \emph{tightly} dominating pair of distributions. Therefore, the tight upper bound on pairwise total variation (i.e., hockey-stick divergence with $e^{\epsilon}=1$) can be  computed from the tightly dominating pair. 

\begin{table}[h!]
\setlength{\tabcolsep}{0.30em}
\renewcommand{\arraystretch}{1.5}
\caption{Additional amplification parameters of $\epsilon$-LDP randomizers.}
\label{tab:parametersldp2}
\centering
\begin{tabular}{|c|c|c|c|}
\hline
\bfseries randomizer & \bfseries param. $p$ & \bfseries param. $\beta$ & \bfseries param. $q$ \\  
\hline
general mechanisms & $e^{\epsilon}$ & $\frac{e^{\epsilon}-1}{e^{\epsilon}+1}$ & $e^{\epsilon}$ \\
\hline

Duchi \textit{et al.} \cite{duchi2013local}  for $[-1,1]^d$  & $e^{\epsilon}$ & $\frac{e^{\epsilon}-1}{e^{\epsilon}+1}$ & $e^{\epsilon}$ \\

Harmony mechanism for $[-1,1]^d$ \cite{nguyen2016collecting} & $e^{\epsilon}$ & $\frac{e^{\epsilon}-1}{e^{\epsilon}+1}$ & $e^{\epsilon}$ \\



$k$-subset exponential on $s$ in $d$ options \cite{wang2018privset}  & $e^{\epsilon}$ & $\frac{(e^{\epsilon}-1)({d-s\choose k}-{d-2s\choose k})}{e^{\epsilon}({d\choose k}-{d-s\choose k})+{d-s\choose k}}$ & $e^{\epsilon}$ \\



\hline
\end{tabular}
\end{table}

\end{document}